\numberwithin{equation}{section}
\numberwithin{figure}{section}
\theoremstyle{plain}
\newtheorem{thm}{\protect\theoremname}
\theoremstyle{remark}
\newtheorem{rem}[thm]{\protect\remarkname}
\theoremstyle{plain}
\newtheorem{lem}[thm]{\protect\lemmaname}
\theoremstyle{plain}
\date{}
\providecommand{\corollaryname}{Corollary}
\providecommand{\lemmaname}{Lemma}
\providecommand{\remarkname}{Remark}
\providecommand{\theoremname}{Theorem}
\begin{document}

\global\long\def\divg{{\rm div}\,}%

\global\long\def\curl{{\rm curl}\,}%

\global\long\def\rt{\mathbb{R}^{3}}%

\global\long\def\rn{\mathbb{R}^{n}}%

\global\long\def\dir{\mathcal{E}}%

\title{Random vortex dynamics via functional stochastic differential equations}
\author{Zhongmin Qian\thanks{Mathematical Institute, University of Oxford, OX2 6GG, England, and
Oxford Suzhou Centre for Advanced Research. Email: $\mathtt{qianz@maths.ox.ac.uk}$}, ~Endre Süli\thanks{Mathematical Institute, University of Oxford, OX2 6GG, England, and Oxford Centre for Nonlinear Partial Differential Equations. Email:
$\mathtt{endre.suli@maths.ox.ac.uk}$}, \ and\ Yihuang Zhang\thanks{Mathematical Institute, University of Oxford, OX2 6GG, England, and
Oxford Suzhou Centre for Advanced Research. Email: $\mathtt{yihuang.zhang@maths.ox.ac.uk}$}}
\maketitle
\begin{abstract}
In this paper we present a novel, closed, three-dimensional (3D)
random vortex dynamics system, which is equivalent to the Navier--Stokes
equations for incompressible viscous fluid flows. The new random vortex
dynamics system consists of a stochastic differential equation which is, 
in contrast with the two-dimensional random vortex dynamics equations, 
coupled to a finite-dimensional ordinary functional differential equation. 
This new random vortex system paves the way for devising
new numerical schemes (random vortex methods) for solving three-dimensional
incompressible fluid flow equations by Monte Carlo simulations. In
order to derive the 3D random vortex dynamics equations, we have developed
two powerful tools: the first is the duality of the conditional distributions
of a couple of Taylor diffusions, which provides a path space version
of integration by parts; the second is a forward type Feynman--Kac
formula representing solutions to nonlinear parabolic equations in
terms of functional integration. These technical tools and the underlying
ideas are likely to be useful in treating other nonlinear problems.
\end{abstract}

\noindent \emph{Keywords}: stochastic differential equation, Feynman--Kac representation, pinned diffusion measure, random vortex dynamics, vortex method

\section{Introduction}

The statement of the point vortex method may be traced back to the paper \cite{Rosenhead1932}, while the construction of effective computational methods by using
vortex motions for viscous fluid flows began with the important work  \cite{Chorin1973} (cf. also \cite{ChorinBernard1973} and
\cite{Chorin1979}) in which several important ideas, which became the core of vortex methods, were introduced to turn point-vortex approximations into powerful simulation tools.
While point-vortex approximations are not exactly equivalent to the underlying
fluid dynamics equations, and their convergence to true solutions of the equations
is hard to prove, a great deal of effort has been devoted to proving the convergence of various numerical schemes for the approximate solution of fluid flow equations based on vortex approximations.

The convergence of the random vortex method for two-dimensional (2D) incompressible viscous fluid flow was studied in \cite{Goodman1987}, and the rate of convergence was improved in \cite{Long1988}. Both authors utilized the fact that for 2D fluid flows vortex motions have the special feature that vortices are transported without experiencing nonlinear stretching.

The study of 3D inviscid fluid flows based on vortex dynamics has been an important discipline; one can find a review of advances in this field in \cite{MajdaandBertozzi2002} and  \cite{CottetandKoumoutsakos2000}, for example. 3D vortex methods for inviscid fluid flows were first presented by \cite{AndersonGreengard1985}, and the convergence of the method was subsequently also proved. In \cite{BealeMajda1982a, BealeMajda1982b}, the convergence of 3D vortex methods based on a Lagrangian frame was proved, and a new vortex method without the calculation of the Lagrangian frame was later proposed in \cite{Beale1986}.  In \cite{CottetGoodmanHou1991},  a vortex method without mollifying the singular Biot--Savart kernel was proposed.  For viscous flows certain numerical methods for computing solutions to fluid dynamics equations based on vortex dynamics were studied, for example, in \cite{knio1990numerical} and \cite{qian2021tracking}. However, the exact random vortex dynamics equations for the 3D incompressible Navier--Stokes equations have remained elusive. There are various stochastic representations of  solutions to the 3D Navier--Stokes equations in the existing literature.  In \cite{constantin2008stochastic}, a stochastic Lagrangian representation for the velocity of an incompressible fluid flow is obtained by making use of stochastic flows and Leray's projection. 
 In \cite{zhang2016stochastic} the existence and uniqueness of the solutions to the Navier--Stokes equation was studied  by using stochastic flow methods, and  a relationship between mild solutions  and stochastic representations has been investigated in \cite{olivera2021probabilistic}. In \cite{zhang2010stochastic}, stochastic flows associated with the Jacobian of the Taylor diffusion are used to study the solutions of the Navier--Stokes equation, while this approach is intended to be used for numerical algorithms. The authors of \cite{albeverio2010generalized, busnello2005probabilistic} proposed different approaches, and obtained McKean--Vlasov type stochastic differential equations, which however  involve a time-reversal in the coefficients. This formulation is therefore not amenable for numerical simulations. Other probabilistic representations are possible; for example, in \cite{ossiander2005probabilistic} solutions are expressed as functionals of a Markov process indexed by the nodes of a binary tree.

Over the last two decades several deterministic vortex methods were proposed to approximate the dynamics of viscous fluid flows. For example, a particle strength exchange method (PSE) was explored in \cite{degond1989weighted} and \cite{cottet1990particle}, where the authors approximate the Laplace operator via an integral operator and solve the related finite-dimensional ordinary differential equations. The Diffusion Velocity Method (DVM) was proposed and studied by  \cite{fronteau1984lie,ogami1991viscous} and \cite{beaudoin2003simulation}, where an artificial velocity field was incorporated into the dynamics in order to handle the diffusion of vorticity. In \cite{wang2021sharp} the dynamics of  Brownian motion type particles moving in chaotic flows are studied. More recently, machine learning techniques were applied in models of fluid flows; see, for example, \cite{kutz2017deep}, \cite{lee_you_2019} and \cite{gazzola2014reinforcement}. These methods have been useful for solving the 3D Navier--Stokes equations; they have, however, not revealed the underlying dynamics of vortices. 

In this paper we derive a system of random dynamical equations for the three-dimensional incompressible Navier--Stokes equations, which is an exact closure of Taylor's diffusion equation, the equation of motion for the ``imaginary'' Brownian fluid particles. The novel closure equations consist of a stochastic differential equation involving the law of its solution, coupled with an ordinary functional differential equation. The closed Taylor diffusion equations may then be used to design a random vortex method for the numerical approximation of the velocity field in three-dimensional incompressible viscous fluid flows by using a Monte Carlo method.

It is of course natural to explore the possibility of extending
the proposed approach to other important classes of fluid flows, such as wall-bounded flows and the dynamics of viscous fluid flows in boundary layers. At the time of writing, significant progress has been made in this direction. For example, we are able
to establish, with some subtle modifications, a forward type Feynman--Kac formula, and thus obtain similar, though more complicated, vortex dynamics for wall-bounded flows. The most subtle aspect in relation to wall-bounded flows with the approach proposed herein is handling the shear stress at the wall. The corresponding results will be published in a separate paper. The method presented here may also be applied to study other nonlinear problems by using diffusions constrained to domains with boundaries (cf. \cite{Venttsel1959} and \cite{StroockVaradhanBoundary},  for example). The reader may wish to consult \cite{GaoLiandXiao2018}, \cite{GaoRoquejoffre2022} and the references therein for a similar study. 

The paper is organised as follows. In Section 2 we describe the main ideas in our approach and establish the notation that will be used in subsequent sections. In Section 3 we introduce a useful tool, called the pinned diffusion measure, the conditional law of an irreversible diffusion given a terminal location, which enables us to perform integration-by-parts in the proof of our main result in Section 5. In Section 4 we develop another probabilistic tool, a forward type Feynman--Kac representation for solutions of the vorticity equation. In Section 5, by exploiting the tools developed in Sections 3 and 4, a closed and exact set of random vortex dynamics equations is obtained; its discretisation gives rise to a direct numerical scheme defined in Section 6. Numerical results and the assessment of the numerical method on model problems are discussed in Section 7.

\section{Several basic facts and the main ideas}

The random vortex method for two-dimensional (2D) viscous incompressible
fluid flows based on exact distributional stochastic differential
equations has been studied in \cite{Long1988}, whose approach makes
use of a special feature of the vorticity equations in two space dimensions,
that vortices in 2D fluid flows are simply transported
without nonlinear stretching of vortex motions.

Let $u=(u^{1},u^{2},u^{3})$ be the velocity field of an incompressible
fluid flow with viscosity $\nu>0$, and let $\omega:=\nabla\wedge u$ be
its vorticity. Then, the evolution of $u$ is governed by the incompressible Navier--Stokes equations
\begin{equation}
\frac{\partial u}{\partial t}+(u\cdot\nabla)u=\nu\Delta u-\nabla p\label{ns-m1}
\end{equation}
and
\begin{equation}
\nabla\cdot u=0.\label{ns-m2}
\end{equation}
The vorticity $\omega$ evolves according to the vorticity equation
\begin{equation}
\frac{\partial\omega}{\partial t}+(u\cdot\nabla)\omega=\nu\Delta\omega+(\omega\cdot\nabla)u,\label{vort-m1}
\end{equation}
where
\begin{equation}
\omega=\nabla\wedge u.\label{vort-m2}
\end{equation}
In the equations (\ref{vort-m1}), (\ref{vort-m2}) it is
the vorticity $\omega$ that is considered as the main fluid dynamic
variable, and the velocity $u$ is determined by the second equation,
\eqref{vort-m2}, together with the incompressibility condition \eqref{ns-m2}.
The fundamental difference between 2D fluid flows and 3D fluids flows
lies in the fact that for 2D flows (i.e., flows for which the third velocity component $u^{3}=0$, and the first two velocity components, $u^{1}$ and $u^{2}$, depend on the first two coordinates, $x^{1}$ and $x^{2}$, only) the nonlinear vorticity stretching term $(\omega\cdot\nabla)u$ appearing on the right-hand side of \eqref{vort-m1} vanishes identically; hence the vorticity $\omega=(0,0,\omega^{3})$, which is identified with the scalar function $\omega^{3}$, evolves according to the following parabolic equation: 
\begin{equation}
\frac{\partial\omega}{\partial t}+(u\cdot\nabla)\omega=\nu\Delta\omega,\quad\omega(\cdot,0)=\omega_{0}.\label{vort-m3}
\end{equation}
This crucial fact allows one to close the stochastic differential equations
of Taylor's Brownian motion $X$ (cf. \cite{Taylor1921,BealeMajda1981a,BealeMajda1982a,BealeMajda1982b}
and \cite{Long1988}): 
\[
\mathrm{d}X(t)=u(X(t),t)\,\mathrm{d}t+\sqrt{2\nu}\,\mathrm{d}B(t),\quad X(0)=x,
\]
issued from an initial state $x$, where $B$ is a three-dimensional
Brownian motion, to provide powerful random vortex methods for computing
2D fluid flows; cf. \cite{CottetandKoumoutsakos2000, Long1988}
and \cite{MajdaandBertozzi2002} for details. 

To describe our approach and the contributions in the present work,
we shall recall the key steps in the derivation of the 2D random vortex method
in more detail and establish the notation that we will use throughout
the paper. Let us begin by recalling several important mathematical
structures associated with the velocity vector field $u(x,t)$. We shall 
discuss these constructions for a general vector field $b(x,t)$
for reasons that will become clear in what follows.

For a time-dependent, bounded and Borel measurable vector field $b(x,t)$
(where $x\in\mathbb{R}^{d}$, $t\geq0$, and for the physically most relevant
case of interest to us here, the spatial dimension $d=3$), let 
\begin{equation}
L_{b(x,t)}:=\nu\Delta+b(x,t)\cdot\nabla\label{eq:form 2}
\end{equation}
which is a one-parameter family of second-order elliptic differential
operators on $\mathbb{R}^{d}$. While the assumption of boundedness
of $b$ is technical and may be replaced by other technical (weaker)
conditions, we will not pursue this issue further, as our key
aim in this paper is to derive a distributional stochastic differential
equation, which will be used to design random vortex methods. If there is no risk of confusion, $L_{b(x,t)}$ will be abbreviated as $L_{b}$, where
the arguments of $b$ are suppressed. Unless otherwise stated, we shall
assume that the drift vector field $b(x,t)$ is \emph{solenoidal}
(i.e., \emph{divergence-free}), that is, for every $t$, the divergence
$\nabla\cdot b(\cdot,t)$ vanishes identically (in the sense of distributions
on $\mathbb{R}^d$).
Under these assumptions, the formal adjoint operator $L_{b}^{\star}=L_{-b}$,
and therefore $L_{b}^{\star}$ is an elliptic operator of the same type as $L_b$. Let $\varGamma_{b}(x,t;\xi,\tau)$ (for $0\leq\tau<t$
and $\xi$, $x\in\mathbb{R}^{d}$) denote the fundamental solution to
the forward heat operator $L_{b}-\frac{\partial}{\partial t}$, and
$\varGamma_{b}^{\star}(x,t;\xi,\tau)$ (for $0\leq t<\tau$, $x$, $\xi\in\mathbb{R}^{d}$)
the fundamental solution to the backward heat operator $L_{b}^{\star}+\frac{\partial}{\partial t}$, 
respectively (cf. Chapter 1 in \cite{Friedman1964} for their definitions,
constructions and basic properties). Since $b$ is bounded,
both $\varGamma_{b}(x,t;\xi,\tau)$ and $\varGamma_{b}^{\star}(x,t;\xi,\tau)$
are positive and continuous in $x,\xi$ and $t>\tau\geq0$
(cf. \cite{Friedman1964} and \cite{Nash1958}, for example), and
$\varGamma_{b}(x,t;\xi,\tau)=\varGamma_{b}^{\star}(\xi,\tau;x,t)$
for $t>\tau\geq0$ and $x,\xi\in\mathbb{R}^{d}$. There are several
probabilistic structures associated with a vector field $b(x,t)$
too, and these constructions will play an important r{\^o}le in this paper.
Let $p_{b}(\tau,\xi,t,x)$ denote the transition probability density
function of the diffusion process with associated infinitesimal generator
$L_{b}$ (cf. \cite{IkedaandWatanabe1989} and \cite{StroockandVaradhan1979})
in the following sense. Suppose that $X(\xi,\tau,t)$ (where $\xi\in\mathbb{R}^{d}$
is the initial state at instance $\tau$, $\tau\leq t$) is
a diffusion process with associated infinitesimal generator $L_{b}$, i.e.,
$X$ is a (weak) solution to the stochastic differential equation
(SDE) 
\begin{equation}
\mathrm{d}X(\xi,\tau,t)=b(X(\xi,\tau,t),t)\,\mathrm{d}t+\sqrt{2\nu}\,\mathrm{d}B(t)\quad\textrm{ for }t\geq\tau\label{sde-m1}
\end{equation}
and 
\begin{equation}
X(\xi,\tau,s)=\xi\;\textrm{ for }s\leq\tau,\label{sde-m2}
\end{equation}
where $B$ is a $d$-dimensional Brownian motion on a probability
space $(\varOmega,\mathcal{F},\mathbb{P})$; $X(\xi,\tau,t)$ may
be called the Brownian fluid particle with velocity $b(x,t)$. In
the case when $\nu=0$, $t\rightarrow X(\xi,\tau,t)$ is just the
integral curve of the vector field $b(x,t)$ starting at $\xi \in \mathbb{R}^d$ at time $\tau$. Then, 
\begin{equation}
p_{b}(\tau,\xi,t,x)\,\mathrm{d}x=\mathbb{P}\left[X(\xi,\tau,t)\in \mathrm{d}x\right]\label{sde-m3}
\end{equation}
for any $t>\tau\geq0$. When $\tau=0$, we will simply
write $X(\xi,t)$ for $X(\xi,0,t)$ unless otherwise indicated. Since
$b(x,t)$ is divergence-free, $L_{b}^{\star}=L_{-b}$; therefore, 
\begin{equation}
p_{b}(\tau,\xi,t,x)=\varGamma_{-b}^{\star}(\xi,\tau;x,t)=\varGamma_{-b}(x,t;\xi,\tau)\label{def-r1}
\end{equation}
for $t>\tau\geq0$ and $\xi,x\in\mathbb{R}^{d}$. 

Up to now we have collected several constructions that are valid in
any number of space dimensions; we are now in a position to formulate the
random vortex dynamics equations in two dimensions.

For a 2D viscous flow with velocity $u(x,t)$, the vorticity $\omega=\nabla\wedge u$
solves the forward ``linear'' parabolic equation \eqref{vort-m3},
which may be written as 
\begin{equation}
\left(\frac{\partial}{\partial t}-L_{-u}\right)\omega=0,\quad\omega(x,0)=\omega_{0}(x),\label{vort-m3e}
\end{equation}
subject to appropriate boundary conditions at infinity. Hence,
\begin{equation}
\omega(x,t)=\int_{\mathbb{R}^{2}}\varGamma_{-u}(x,t;\xi,0)\,\omega_{0}(\xi)\, \mathrm{d}\xi\label{rep-m1}
\end{equation}
for $(x,t)\in\mathbb{R}^{2}\times[0,\infty)$, provided that $\omega_{0}(x)$
tends to zero sufficiently fast as $|x| \rightarrow \infty$. Thanks to \eqref{def-r1}
the equality \eqref{rep-m1} may be written as 
\begin{equation}
\omega(x,t)=\int_{\mathbb{R}^{2}}p_{u}(0,\xi,t,x)\,\omega_{0}(\xi)\, \mathrm{d}\xi,\label{rep-mp}
\end{equation}
where $p_{u}(0,\xi,t,x)$ is the probability density function of the Taylor
diffusion $X(\xi,t)$ with velocity $u$, which started at $\xi \in \mathbb{R}^2$ at $t=0$.
Now recall that the Taylor diffusion is the weak solution to the SDE
\begin{equation}
\mathrm{d}X(\xi,t)=u(X(\xi,t),t)\,\mathrm{d}t+\sqrt{2\nu}\,\mathrm{d}B(t),\quad X(\xi,0)=\xi\label{taylor-m2}
\end{equation}
for $t\geq0$. The idea in random vortex methods is to rewrite the SDE
\eqref{taylor-m2} as a closed stochastic differential equation
depending only on the initial datum $\omega_{0}$ and the distribution
of Taylor's diffusion $X$. To achieve this goal, one exploits the relations
$\omega=\nabla\wedge u$ and $\nabla\cdot u=0$, which yields
the Poisson equation $\Delta u=-\nabla\wedge\omega$, which, in turn, leads
to the Biot--Savart law. Under the assumption that both $u$ and $\omega$
decay to zero at infinity sufficiently fast, we have that
\begin{equation}
u(x,t)=\int_{\mathbb{R}^{2}}K(x-y)\,\omega(y,t)\, \mathrm{d}y,\label{u-rep-m1}
\end{equation}
where $K(z):=\frac{z^{\bot}}{|z|^{2}}$ is the 2D Biot--Savart kernel.
Together with \eqref{rep-mp} one may rewrite 
\begin{align*}
u(x,t) & =\int_{\mathbb{R}^{2}}\left[K(x-y)\int_{\mathbb{R}^{2}}p_{u}(0,\eta,t,y)\,\omega_{0}(\eta)d\eta\right]\,\mathrm{d}y\\
 & =\int_{\mathbb{R}^{2}}\left[\int_{\mathbb{R}^{2}}K(x-y)\,p_{u}(0,\eta,t,y)\,\mathrm{d}y\right]\omega_{0}(\eta)\,\mathrm{d}\eta\\
 & =\int_{\mathbb{R}^{2}}\mathbb{E}\!\left[K(x-X(\eta,t))\right]\omega_{0}(\eta)\,\mathrm{d}\eta.
\end{align*}
Thanks to this representation for the velocity $u(x,t)$ one is therefore
able to close the stochastic differential equation \eqref{taylor-m2}:
\begin{equation}
\mathrm{d}X(\xi,t)=\left(\left.\int_{\mathbb{R}^{2}}\mathbb{E}\!\left[K(x-X(\eta,t))\right]\omega_{0}(\eta)\,\mathrm{d}\eta\right|_{x=X(\xi,t)}\right)\mathrm{d}t+\sqrt{2\nu}\,\mathrm{d}B\label{sde-vm1}
\end{equation}
and 
\begin{equation}
X(\xi,0)=\xi\label{sde-vm2}
\end{equation}
for $(\xi,t)\in\mathbb{R}^{2}\times[0,\infty)$. The system of \eqref{sde-vm1},
\eqref{sde-vm2} is a stochastic differential equation dependent
on the law of its solution, i.e., it is a system of distributional
stochastic differential equations.

There are obvious advantages to using numerical methods based on the exact
stochastic dynamical system (\ref{sde-vm1}), (\ref{sde-vm2}). By employing
a discreterisation of the finite-dimensional integral appearing in \eqref{sde-vm1}
and by appealing to the strong law of large numbers for the expectation,
the closed system of stochastic differential equations (\ref{sde-vm1}), (\ref{sde-vm2}) leads to random vortex numerical schemes based on
Monte Carlo simulation techniques for 2D incompressible fluid flows. 

It would be helpful to formulate an exact random vortex dynamics system along
the same lines for viscous incompressible fluid flows in three space dimensions. There is, however, a substantial obstacle one has to overcome, caused by the appearance
of the nonlinear stretching term in the 3D vorticity equation. The nonlinear
stretching term plays a particularly important role in 3D; indeed, in its absence 
there is no turbulent motion. Let us explain the key difficulty we have to overcome.

We begin with the vorticity equations (\ref{vort-m1}), (\ref{vort-m2})
for incompressible fluid flow with viscosity $\nu>0$ and velocity
$u(x,t)$, which are rewritten as follows: 
\begin{equation}
\left(\frac{\partial}{\partial t}-L_{-u}\right)\omega=S\omega,\quad\omega(x,0)=\omega_{0}(x),\label{3Dvort-m1}
\end{equation}
where $S=(S_{j}^{i})$ is the symmetric velocity gradient tensor, with entries  $S_{j}^{i}:=\frac{1}{2}\left(\frac{\partial u^{i}}{\partial x^{j}}+\frac{\partial u^{j}}{\partial x^{i}}\right)$,
and $\omega:=\nabla\wedge u$ is now, unlike in the 2D setting, a three-component vector field. In three space dimensions,
the Biot--Savart law may be written as 
\begin{equation}
u(x,t)=\int_{\mathbb{R}^{3}}K(x-y)\wedge\omega(y,t)\, \mathrm{d}y,\label{u-rep-m3}
\end{equation}
where $K=(K^{1},K^{2},K^{3})$ is the singular vector kernel with components 
\begin{equation}
K^{i}(z):=-\frac{z^{i}}{4\pi|z|^{3}}\label{3d B_S kernel}
\end{equation}
for $i=1,2,3$. Because of the nonlinear stretching term, there is now no
explicit representation for the vorticity $\omega$, which is considered as
a solution to \eqref{3Dvort-m1}, in terms of the initial vorticity
$\omega_{0}$ and the fundamental solution of $L_{-u}$, although
the following implicit representation holds:
\begin{equation}
\omega(x,t)=\int_{\mathbb{R}^{3}}p_{u}(0,\eta,t,x)\,\omega_{0}(\eta)\, \mathrm{d}\eta+\int_{0}^{t}\int_{\mathbb{R}^{3}}p_{u}(s,\xi,t,x)\,S(\xi,s)\,\omega(\xi,s)\,\mathrm{d}\xi \,\mathrm{d}s;\label{3d-no-rep}
\end{equation}
this expression, however, is not useful for the purpose of closing 
the system of stochastic differential equations \eqref{taylor-m2} for Taylor diffusion. New ideas are therefore needed
in order to formulate an exact system of random vortex dynamics. 

A natural approach is to consider the symmetric velocity gradient tensor $S=(S_{j}^{i})$
as a multiplicative factor and apply the Feynman--Kac formula to the 
initial-value problem \eqref{3Dvort-m1} by constructing a gauge functional
from $S$ (cf. \cite{Feynman1948,Kac1950,Kac1959} for the linear case
and \cite{PardouxandPeng1990,Peng1991} for semilinear parabolic
equations). Since the elliptic operator $L_{-u(x,t)}$ is temporally
nonconstant and irreversible (which is canonical for flows
observed in nature), the Feynman--Kac formula for $\omega$ thus obtained
involves the distribution of a backward Taylor diffusion. Therefore
there are at least two substantial difficulties one needs to overcome.
First, this backward type Feynman--Kac formula cannot be used to close
the system of stochastic differential equations \eqref{taylor-m2}, as we have
to run a time-reversed Taylor diffusion in order to formulate the Feynman--Kac
formula. Second, there are currently no satisfactory numerical methods for the approximate solution of semilinear parabolic equations based on the nonlinear version of the (backward) Feynman--Kac formula.
In fact, existing methods appeal to numerical solutions of the
corresponding PDEs, which is then in conflict with our aim to compute numerical approximations of solutions to the Navier--Stokes equations via Monte Carlo simulations.

We overcome these difficulties by devising two powerful new tools,
which provide the means to close the system of stochastic differential equations
\eqref{taylor-m2} for Taylor diffusion with velocity $u(x,t)$.
Recall that in two space dimensions, in order to close the system of stochastic differential equations \eqref{taylor-m2} by eliminating the velocity $u$, 
two key steps are involved. The first is to devise a representation of the
vorticity $\omega$ by `solving' the vorticity equation, and the
second step is to apply the Biot--Savart law to obtain a representation
of the velocity $u$ in terms of $\omega$, then perform integration
by parts (in two space dimensions one has to do this at any fixed time $t$)
relying on the fact that $u$ is divergence-free, whereby
$L_{-u}=L_{u}^{\star}$). In our approach, in the three-dimensional
case, we devise a new representation formula, called a forward Feynman--Kac
formula, stated in \eqref{eq:F-K-f01} below, for the vorticity $\omega$
by using the Taylor diffusion rather than a time-reversed diffusion.
We are able to achieve this by using the fact that $u(x,t)$
is divergence-free, which implies that there is a convenient duality between
the $L_{u}$-diffusion and the $L_{-u}$-diffusion conditional on their
initial and final states; cf. Theorem \ref{Thm-Pinned} below.
We then close the system of stochastic differential equations \eqref{taylor-m2} by performing integration by parts at the path space level, -- a technique which also appears to be new.

\section{Pinned diffusion processes}

In this section we introduce the first technical tool in our study,
pinned diffusion measures and a time-reversal theorem. This is considered as a path-space version of the integration by parts technique. Let $\nu>0$
be a constant, let $b(x,t)$ be a Borel measurable and bounded vector
field depending on a parameter $t\geq0$, and let $L_{b}:=\nu\Delta+b\cdot\nabla$.
Let $\varOmega := C([0,\infty),\mathbb{R}^{d})$ be the path space
of all continuous paths in $\mathbb{R}^{d}$. The canonical coordinate
process on $\varOmega$ is the ordered family of coordinate mappings
$w\in\varOmega$, with value $w(t)$ at instance $t\geq0$, where
$t$ runs through $[0,\infty)$. Let $\mathcal{F}_{t}^{0}:=\sigma\{w(s):s\leq t\}$
be the smallest $\sigma$-algebra relative to which all coordinate mappings
$w\rightarrow w(s)$ are measurable for all $s\leq t$, and $\mathcal{F}^{0}:=\sigma\{w(s):s\geq0\}$.
We have thus turned the path space $C([0,\infty),\mathbb{R}^{d})$ into
a measurable space $(\varOmega,\mathcal{F}^{0})$. By the diffusion
process $X=(X_{t})_{t\geq0}$, with associated infinitesimal generator $L_{b}$,
we mean the Markov family of probability measures 
\[
\{\mathbb{P}_{b}^{\xi,\tau}:\xi\in\mathbb{R}^{d}\textrm{ and }\tau\geq0\},
\]
where $\mathbb{P}_{b}^{\xi,\tau}$ is the unique probability measure
on $(\varOmega,\mathcal{F}^{0})$ such that the finite-dimensional
joint distribution 
\[
\mathbb{P}_{b}^{\xi,\tau}\left[w\in\varOmega:w(s)=\xi\textrm{ for all }s\leq\tau,w(t_{1})\in \mathrm{d}x_{1},\ldots,w(t_{k})\in \mathrm{d}x_{k}\right]
\]
is given by 
\[
p_{b}(\tau,\xi,t_{1},x_{1})\,p_{b}(t_{1},x_{1},t_{2},x_{2})\,\cdots\, p_{b}(t_{k-1},x_{k-1},t_{k},x_{k})\,\mathrm{d}x_{1}\cdots \mathrm{d}x_{k}
\]
for all $k=0,1,2,\ldots$, and $\tau<t_{1}<\cdots<t_{k}$. Here, thanks
to the relation \eqref{def-r1}, one may identify the transition probability
density function $p_{b}(\tau,\xi,t,x)$ with the fundamental solution
$\varGamma_{-b}(x,t;\xi,\tau)$, which is positive and jointly Hölder
continuous. 

If $\tau\geq0$, $T>\tau$ are fixed, and $\xi,\eta\in\mathbb{R}^{d}$
are two given states, then $\mathbb{P}_{b}^{\xi,\tau\rightarrow\eta,T}$
denotes the distribution of the $L_{b}$-diffusion process $X=(X_{t})_{t\geq0}$
conditional on the event that $X_{s}=\xi$ for $s\leq\tau$ and $X_{s}=\eta$
for $s\geq T$. $\mathbb{P}_{b}^{\xi,\tau\rightarrow\eta,T}$ is again
a probability measure on the continuous path space $\varOmega$, called
the pinned diffusion measure starting from $\xi$ at instance $\tau$
and ending at state $\eta$ at the terminal time $T$, which can be
described as follows. It is known that the laws of a Markov
process conditional on the terminal values at some fixed time $T>0$
are also Markovian; see, for example, Section 14 in \cite{DellacherieandMeyerVolumeD} and \cite{ChungWalsh2005}. According to eq. (14.1) in \cite[pp. 161]{DellacherieandMeyerVolumeD},
we define
\begin{equation}
q_{b}(s,x,t,y):=\frac{p_{b}(s,x,t,y)\,p_{b}(t,y,T,\eta)}{p_{b}(s,x,T,\eta)}\label{cond-01}
\end{equation}
 for $\tau<s<t<T$ and $x,y\in\mathbb{R}^{d}$. It is then easy to
verify that 
\[
\int_{\mathbb{R}^{d}}q_{b}(s,x,t,y)\, \mathrm{d}y=1
\]
and, for $\tau<s<t<r<T$, one has that 
\[
\int_{\mathbb{R}^{d}}q_{b}(s,x,t,y)\,q_{b}(t,y,r,z)\, \mathrm{d}y=q_{b}(s,x,r,z).
\]
Hence, $\left\{ q_{b}(s,x,t,y):\tau\leq s<t<T\right\} $ defines
a temporal inhomogeneous Markov family. The $L_{b}$-pinned
diffusion measure or the $L_{b}$-diffusion bridge measure $\mathbb{P}_{b}^{\xi,\tau\rightarrow\eta,T}$
is by definition the conditional law of $\mathbb{P}_{b}^{\xi,\tau}\left[\,\cdot\,|w(T)=\eta\right]$.
It turns out that $\mathbb{P}_{b}^{\xi,\tau\rightarrow\eta,T}$ is
the unique probability measure on $(\varOmega,\mathcal{F}^{0})$ such
that the finite-dimensional marginal law 
\begin{equation}
\mathbb{P}_{b}^{\xi,\tau\rightarrow\eta,T}\left[w\in\varOmega:w(t_{0})\in \mathrm{d}x_{0},\ldots,w(t_{k})\in \mathrm{d}x_{k},\ldots,w(t_{n+1})\in \mathrm{d}x_{n+1}\right]\label{c-m1}
\end{equation}
is given by 
\begin{equation}
q(\tau,\xi,t_{1},x_{1})\prod_{k=1}^{n-1}q(t_{k},x_{k},t_{k+1},x_{k+1})\,q(t_{n},x_{n},T,\eta)\,\delta_{\xi}(\mathrm{d}x_{0})\,\mathrm{d}x_{1}\cdots\, \mathrm{d}x_{n}\,\delta_{\eta}(\mathrm{d}x_{n+1}),\label{c-m2}
\end{equation}
where 
\[
\tau=t_{0}<t_{1}<\cdots<t_{n}<t_{n+1}=T
\]
is any finite partition of $[\tau,T]$, and 
\begin{equation}
\mathbb{P}_{b}^{\xi,\tau\rightarrow\eta,T}\left[w\in\varOmega:w(s)=\xi\textrm{ and }w(t)=\eta\textrm{ for }s\leq\tau\textrm{ and }t\geq T\right]=1.\label{c-m3}
\end{equation}

By definition, if $\tau<s<T$ and if $f$ is a bounded Borel
measurable function $f$, we have that
\begin{align*}
\mathbb{P}_{b}^{\xi,\tau\rightarrow\eta,T}\left[f(w(s))\right] & =\int_{\mathbb{R}^{d}}p_{b}(\tau,\xi,s,y)\, f(y)\, \mathrm{d}y\\
 & =\int_{\mathbb{R}^{d}}\frac{p_{b}(s,y,T,\eta)}{p_{b}(\tau,\xi,T,\eta)}\, f(y)\, p_{b}(\tau,\xi,s,y)\, \mathrm{d}y\\
 & =\mathbb{P}_{b}^{\xi,\tau}\left[\frac{p_{b}(s,w(s),T,\eta)}{p_{b}(\tau,\xi,T,\eta)}\, f(w(s))\right],
\end{align*}
and we may therefore conclude that 
\begin{equation}
\left.\frac{\mathrm{d}\mathbb{P}_{b}^{\xi,\tau\rightarrow\eta,T}}{\mathrm{d}\mathbb{P}_{b}^{\xi,\tau}}\right|_{\mathcal{F}_{s}^{0}}=\frac{p_{b}(s,w(s),T,\eta)}{p_{b}(\tau,\xi,T,\eta)}\label{eq:den-01}
\end{equation}
for $s\in[\tau,T)$. 

If $\tau=0$ and $T>0$, then, via the canonical embedding, $\mathbb{P}_{b}^{\xi,\tau\rightarrow\eta,T}$
can be considered  (by an abuse of notation) as the probability
measure on $\varOmega$ of all continuous
paths in $\mathbb{R}^{d}$ with running time from $0$ to $T$. 

The duality between the laws of the $L_{b}$-diffusions and the laws
of the $L_{-b}$-diffusions registered in the next theorem plays an important rôle in the proposed new 3D random vortex method. 
\begin{thm}
\label{Thm-Pinned}Let $b^{T}(x,t)$ be a bounded, Borel measurable
and divergence-free vector field such that $b^{T}(x,t)=b(x,T-t)$
for all $x$ and $t\leq T$. Then, the associated pinned measures satisfy the
following duality relation: 
\begin{equation}
\mathbb{P}_{b}^{\xi,0\rightarrow\eta,T}=\mathbb{P}_{-b^{T}}^{\eta,0\rightarrow\xi,T}\circ\tau_{T},\label{duality-m1}
\end{equation}
where $\tau_{T}$ is the time-reversal operator at $T$, that is,
$\tau_{T}:\varOmega\rightarrow\varOmega$ is the mapping, which sends $w\in\varOmega$ to $\tau_{T}w \in \Omega$, where $\tau_{T}w(t)=w(T-t)$ for $t\in[0,T]$. 
\end{thm}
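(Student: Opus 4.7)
The plan is to verify \eqref{duality-m1} by showing that the two probability measures on $(\varOmega,\mathcal{F}^{0})$ induce identical finite-dimensional joint distributions on $(0,T)$ and satisfy the same boundary prescription outside $[0,T]$; by the characterization \eqref{c-m1}--\eqref{c-m3} this uniquely determines a pinned diffusion measure and hence settles the identity.

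The keystone is a transition-density duality,
\[
p_{b}(s,x,t,y)=p_{-b^{T}}(T-t,\,y,\,T-s,\,x),\qquad 0\leq s<t\leq T,\ x,y\in\mathbb{R}^{d}.
\]
I would prove it as follows. Fix $(s,x)$. Since $b(\cdot,t)$ is divergence-free we have $L_{b}^{\star}=L_{-b}$, so $y\mapsto p_{b}(s,x,t,y)$ solves the forward Kolmogorov equation $\partial_{t}p_{b}=L_{-b,y}(t)\,p_{b}$ with delta initial datum as $t\downarrow s$. Define $\tilde p(s,x,t,y):=p_{-b^{T}}(T-t,y,T-s,x)$. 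Using $-b^{T}(y,T-t)=-b(y,t)$, the backward Kolmogorov equation satisfied by $p_{-b^{T}}$ in its initial arguments gives
\[
\partial_{t}\tilde p(s,x,t,y)=\nu\Delta_{y}\tilde p(s,x,t,y)-b(y,t)\cdot\nabla_{y}\tilde p(s,x,t,y)=L_{-b,y}(t)\,\tilde p(s,x,t,y),
\]
together with $\tilde p(s,x,t,y)\to\delta_{x}(y)$ as $t\downarrow s$. Uniqueness for this Cauchy problem (the parametrix construction of \cite{Friedman1964} and the Aronson-type bounds of \cite{Nash1958} providing the relevant function class) then forces $\tilde p=p_{b}$.

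Granted the density duality, the marginal identity is bookkeeping. Pick times $0=t_{0}<t_{1}<\cdots<t_{n}<t_{n+1}=T$ and states $x_{1},\dots,x_{n}\in\mathbb{R}^{d}$; by \eqref{eq:den-01} the joint density of $(w(t_{1}),\dots,w(t_{n}))$ under $\mathbb{P}_{b}^{\xi,0\to\eta,T}$ equals
\[
\frac{1}{p_{b}(0,\xi,T,\eta)}\,p_{b}(0,\xi,t_{1},x_{1})\,p_{b}(t_{1},x_{1},t_{2},x_{2})\cdots p_{b}(t_{n},x_{n},T,\eta).
\]
Setting $s_{k}:=T-t_{n+1-k}$ and $y_{k}:=x_{n+1-k}$, the same cylinder set pulls back through $\tau_{T}$ to $\{w(s_{k})\in\mathrm{d}y_{k}:k=1,\dots,n\}$, whose joint density under $\mathbb{P}_{-b^{T}}^{\eta,0\to\xi,T}$ is the analogous expression with $p_{-b^{T}}$ replacing $p_{b}$ and the endpoints $(\xi,\eta)$ replaced by $(\eta,\xi)$. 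Applying the density duality to each factor, including the normalizer via $p_{-b^{T}}(0,\eta,T,\xi)=p_{b}(0,\xi,T,\eta)$, converts this expression term-by-term into the one above. The boundary clause \eqref{c-m3} transfers automatically since $\tau_{T}$ swaps $t=0$ with $t=T$ and exchanges the boundary states $\xi$ and $\eta$.

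The principal obstacle is the density duality itself; once it is in place, everything else is algebra. Its success hinges on the combination of the divergence-free hypothesis (which delivers the adjoint identity $L_{b}^{\star}=L_{-b}$) and the explicit temporal reflection $b^{T}(\cdot,t)=b(\cdot,T-t)$, which converts the backward Kolmogorov equation satisfied by $p_{-b^{T}}$ in its initial arguments into the forward Kolmogorov equation satisfied by $p_{b}$ in its final arguments. A minor technical subtlety is that $b$ is only assumed bounded and Borel, so one must work within the H\"older-regular fundamental-solution framework of \cite{Friedman1964,Nash1958} rather than the classical smooth-coefficient theory, but uniqueness in that class is standard.
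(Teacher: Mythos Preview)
Your proposal is correct and follows essentially the same route as the paper: establish the transition-density duality $p_{b}(s,x,t,y)=p_{-b^{T}}(T-t,y,T-s,x)$ via a PDE uniqueness argument (the paper packages this as Lemma~\ref{lemma3.1}, passing through the fundamental solution $\varGamma_{b}$, whereas you compare Fokker--Planck equations directly), and then match finite-dimensional marginals. The only difference is cosmetic---your argument for the density identity works in the forward variables $(t,y)$ while the paper works in the backward variables $(x,t)$ through $\varGamma_{b}$---and both rely on the divergence-free hypothesis in the same way.
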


The assertion of this theorem may be reformulated as follows: suppose that $(X_{t})$ is an $L_{b}$-diffusion and $(Y_{t})$ is an $L_{-b^{T}}$-diffusion
on the probability space $(\varOmega,\mathcal{F},\mathbb{P})$; then,
\begin{equation}
\mathbb{E}\!\left[F(X_{\cdot})|X_{0}=\xi,X_{T}=\eta\right]=\mathbb{E}\!\left[F(Y_{T-\cdot})|Y_{0}=\eta,Y_{T}=\xi\right]\label{duality-m2}
\end{equation}
for any bounded or positive Borel measurable function $F$ on $\varOmega$.

To prove this theorem, we need the following simple fact concerning
fundamental solutions.
\begin{lem}
\label{lemma3.1}The following equality holds: 
\begin{equation}
p_{b^{T}}(T-t,x,T-\tau,\xi)=\varGamma_{b}(x,t;\xi,\tau),\label{eq:diff-backT1}
\end{equation}
and therefore 
\begin{equation}
p_{b}(s,x,t,y)=p_{-b^{T}}(T-t,y,T-s,x)\label{eq:rev-01}
\end{equation}
for $0\leq s<t\leq T$ and all $x,y\in\mathbb{R}^{d}$. 
\end{lem}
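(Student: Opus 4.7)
The plan is to identify both sides of \eqref{eq:diff-backT1} as the unique fundamental solution of one and the same forward Cauchy problem, and then to deduce \eqref{eq:rev-01} as a short algebraic consequence of \eqref{eq:diff-backT1} and the basic identity \eqref{def-r1}.

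First, fix $\tau$ and $\xi$, and regard
\[
u(x,t):=p_{b^{T}}(T-t,x,T-\tau,\xi)
\]
as a function of the running pair $(x,t)$ with $t>\tau$. I would like to show that $u$ solves
\[
\bigl(\partial_{t}-L_{b(\cdot,t)}\bigr)u(x,t)=0\quad\text{on}\quad\mathbb{R}^{d}\times(\tau,T],\qquad u(x,\tau^{+})=\delta_{\xi}(x),
\]
whose unique (classical) solution is $\varGamma_{b}(x,t;\xi,\tau)$ by the very definition of the fundamental solution of $L_{b}-\partial_{t}$. To obtain the PDE for $u$, I would invoke the backward Kolmogorov equation for the transition density $p_{b^{T}}(s,y,t,z)$ in its starting pair of variables $(s,y)$: since the drift of the $L_{b^{T}}$-diffusion at time $s$ equals $b^{T}(\cdot,s)=b(\cdot,T-s)$, one has
\[
-\partial_{s}p_{b^{T}}(s,y,t,z)=\nu\Delta_{y}p_{b^{T}}(s,y,t,z)+b(y,T-s)\cdot\nabla_{y}p_{b^{T}}(s,y,t,z).
\]
Setting $s=T-t$, $y=x$, $t=T-\tau$, $z=\xi$ and using $\partial_{t}=-\partial_{s}$ converts this into exactly $(\partial_{t}-L_{b(\cdot,t)})u=0$. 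The initial condition $u(\cdot,\tau^{+})=\delta_{\xi}$ is immediate from the fact that as $t\downarrow\tau$ the transition density $p_{b^{T}}(T-t,\cdot,T-\tau,\xi)$ collapses to $\delta_{\xi}$, since the starting and terminal times coalesce. Uniqueness of the fundamental solution then gives $u=\varGamma_{b}(\cdot,\cdot;\xi,\tau)$, which is \eqref{eq:diff-backT1}.

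With \eqref{eq:diff-backT1} established, the derivation of \eqref{eq:rev-01} is purely algebraic. I would apply \eqref{eq:diff-backT1} with $b$ replaced by $-b$ (noting that $(-b)^{T}=-b^{T}$) to obtain
\[
p_{-b^{T}}(T-t,x,T-\tau,\xi)=\varGamma_{b}(x,t;\xi,\tau)\cdot\text{(not yet)}\ldots
\]
more precisely, $p_{-b^{T}}(T-t,x,T-\tau,\xi)=\varGamma_{-b}(x,t;\xi,\tau)$; then invoke \eqref{def-r1} to identify $\varGamma_{-b}(x,t;\xi,\tau)=p_{b}(\tau,\xi,t,x)$, and finally relabel $(\tau,\xi,t,x)\leadsto(s,x,t,y)$ to arrive at \eqref{eq:rev-01}.

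The only genuine subtlety I anticipate is the justification of the classical backward Kolmogorov equation and uniqueness of the fundamental solution under the stated hypothesis that $b$ is merely bounded and Borel measurable. Both are available from the parametrix construction in Chapter 1 of \cite{Friedman1964} when $b$ is Hölder continuous (the case that arises in the Navier--Stokes application), and under the weaker boundedness assumption they can be recovered by a regularisation-and-passage-to-the-limit argument in conjunction with the Aronson/Nash two-sided bounds already cited in the excerpt. None of this affects the algebraic skeleton of the proof, which is simply the observation that $u(x,t)=p_{b^{T}}(T-t,x,T-\tau,\xi)$ satisfies the \emph{same} Cauchy problem as $\varGamma_{b}(x,t;\xi,\tau)$.
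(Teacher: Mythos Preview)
Your proof is correct and follows essentially the same approach as the paper: both identify $p_{b^{T}}(T-t,x,T-\tau,\xi)$ as the solution of the backward Kolmogorov equation for $p_{b^{T}}$ in the starting variables, perform the time-reflection $t\mapsto T-t$, and then invoke uniqueness of the fundamental solution of $L_{b}-\partial_{t}$. Your write-up is in fact more explicit than the paper's, since you spell out the derivation of \eqref{eq:rev-01} from \eqref{eq:diff-backT1} via the substitution $b\mapsto -b$ and \eqref{def-r1}, and you flag the regularity issue (bounded Borel drift versus H\"older) that the paper leaves implicit.
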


\begin{proof}
Recall that $p_{b^{T}}(t,x,\tau,\xi)$, for fixed $\tau$ and $\xi$,
as a function of $(x,t)$ is the unique solution to the backward equation
\begin{equation}
\left(L_{b^{T}}+\frac{\partial}{\partial t}\right)f=0,\label{eq:diff-bac-01-1}
\end{equation}
for $0\leq t<\tau$, such that $f(x,t)\uparrow\delta_{\xi}$ as $t\uparrow\tau$.
By performing the change of variables $t\rightarrow T-t$ and $\tau\rightarrow T-\tau$,
we have that
\begin{equation}
\left(L_{b}-\frac{\partial}{\partial t}\right)p_{b^{T}}(T-t,x,T-\tau,\xi)=0\label{eq:diff-bac-01-1-1}
\end{equation}
for $0\leq\tau<t\leq T$. Moreover, $p_{b^{T}}(T-t,x,T-\tau,\xi)\uparrow\delta_{\xi}$
as $t\downarrow\tau$. Thus, by the definition and the uniqueness
of $\varGamma_{b}(x,t;\xi,\tau)$, the equality \eqref{eq:diff-backT1}
directly follows. 
\end{proof}
\emph{Proof of Theorem \ref{Thm-Pinned}. } By definition,
\[
\mathbb{P}_{b}^{\eta,0\rightarrow\zeta,T}\left[w(t_{1})\in \mathrm{d}x_{1},\ldots,w(t_{k})\in \mathrm{d}x_{k}\right]
\]
is equal to the measure 
\[
q_{b}(0,\eta,t_{1},x_{1})\cdots q_{b}(t_{i-1},x_{i-1},t_{i},x_{i})\cdots q_{b}(t_{k},x_{k},T,\zeta)\, \mathrm{d}x_{1}\cdots\,  \mathrm{d}x_{k},
\]
which may be written as the following: 
\[
\frac{p_{b}(0,\eta,t_{1},x_{1})\cdots p_{b}(t_{i-1},x_{i-1},t_{i},x_{i})\cdots p_{b}(t_{k},x_{k},T,\zeta)}{p_{b}(0,\eta,T,\zeta)}\, \mathrm{d}x_{1}\cdots\,\mathrm{d}x_{k}.
\]
By Lemma \ref{lemma3.1}, the above finite-dimensional distribution
can be written as 
\[
\frac{p_{-b^{T}}(0,\zeta,T\!-\!t_{k},x_{k})\cdots p_{-b^{T}}(T\!-\!t_{i},x_{i},T\!-\!t_{i-1},x_{i-1}) \cdots p_{-b^{T}}(T\!-\!t_{1},x_{1},T,\eta)}{p_{-b^{T}}(0,\zeta,T,\eta)}\mathrm{d}x_{1}\cdots\, \mathrm{d}x_{k},
\]
which is the same as
\[
\mathbb{P}_{-b^{T}}^{\zeta,0\rightarrow\eta,T}\left[w(T-t_{k})\in \mathrm{d}x_{k},\cdots,w(T-t_{1})\in \mathrm{d}x_{1}\right].
\]
Now the conclusion follows, as a distribution is determined uniquely
by its family of finite-dimensional marginal distributions. This completes
the proof.\hfill $\Box$

\begin{rem}
The well-known setting of generalised integration by parts is the concept of dual (or adjoint) operators in functional analysis, and the concept of a pair of dual Markov processes \cite{BlumenthalGetoor1986}. The path space version (with respect to stationary measures of diffusion processes) was first formulated in \cite{tlyonswzheng1988} via forward and backward martingale decompositions (Lyons--Zheng's decomposition), and their approach was subsequently generalised to nonsymmetric diffusion processes; cf. \cite{tlyonsStoica1999} and the literature therein.
\end{rem}

\section{Forward Feynman--Kac's formula}

In this section we develop our second key tool, a forward Feynman--Kac
formula, which is the key in deriving the 3D random vortex dynamics equations.
We shall continue to assume that $b(x,t)$ is a bounded, Borel measurable and \emph{divergence-free} vector field, with $x \in \mathbb{R}^d$ and $t \geq 0$. 
\begin{thm}
\label{thm:F-K-01} Let $T>0$. Suppose that the vector field $f(x,t)$ is a strong solution
to the parabolic equation: 
\begin{equation}
\left(\frac{\partial}{\partial t}-L_{-b(x,t)}\right)f^{i}(x,t)=S_{j}^{i}(x,t)f^{j}(x,t) \quad \textrm{ in }\mathbb{R}^{d}\times(0,T],\label{eq:S-vot-eq}
\end{equation}
where $i=1,\ldots,d$, subject to the initial condition $f(x,0)=f_{0}$,
and $S_{j}^{i}$ are globally Lipschitz continuous for $i,j=1,\ldots,d$.
Then, 
\begin{equation}
f(x,T)=\int_{\mathbb{R}^{d}}f_{0}(\xi)\,p_{b}(0,\eta,T,x)\,\mathbb{P}\left[Q(0)|X(0)=\eta,X(T)=x\right]\mathrm{d}\xi\label{eq:F-K-f01},
\end{equation}
where 
\begin{equation}
\mathrm{d}X^{i}(t)=b^{i}(X(t),t)\,\mathrm{d}t+\sqrt{2\nu}\,\mathrm{d}B^{i}(t)\label{t-m1}
\end{equation}
and 
\begin{equation}
\frac{\mathrm{d}}{\mathrm{d}t}Q_{j}^{i}(t)=-Q_{k}^{i}(t)S_{j}^{k}(X(t),t),\quad Q_{j}^{i}(T)=\delta_{j}^{i}\label{t-m2}
\end{equation}
for $i,j=1,\ldots,d$, $B$ is a standard $d$-dimensional Brownian
motion on a probability space. 
\end{thm}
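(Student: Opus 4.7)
The plan is to reduce equation \eqref{eq:S-vot-eq} to a standard \emph{backward} linear parabolic system by the time reversal $s = T - t$, apply the classical (multiplicative) Feynman--Kac representation along an auxiliary diffusion with drift $-b^T$, and then invoke Theorem \ref{Thm-Pinned} to convert the resulting conditional expectation into one taken along the $L_b$-diffusion $X$ that appears in \eqref{t-m1}--\eqref{t-m2}.

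Setting $g(x,s):=f(x,T-s)$, $b^T(x,s):=b(x,T-s)$ and $S^T(x,s):=S(x,T-s)$, one checks directly that $g$ satisfies the terminal-value problem
\begin{equation*}
\partial_s g^i + L_{-b^T}\, g^i + S^{T,i}_j\, g^j = 0 \quad\text{on }\ \mathbb{R}^d\times[0,T), \qquad g(\cdot,T) = f_0.
\end{equation*}
Let $Y$ be the $L_{-b^T}$-diffusion started at $x$ at time $s=0$, and let $\Phi(s)$ be the matrix satisfying the pathwise linear ODE $\dot\Phi(s) = \Phi(s)\, S^T(Y_s,s)$, $\Phi(0) = I$. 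A direct Itô computation, in which the transport term produced by $\dot\Phi$ exactly cancels the reaction term coming from $\mathrm{d}g^j(Y_s,s)$, shows that $s\mapsto \Phi^i_j(s)\,g^j(Y_s,s)$ is a local martingale; the Lipschitz hypothesis on $S$ yields, via Grönwall, a uniform pathwise bound on $|\Phi|$, and together with the standard regularity and boundedness inherited by $g$ from the strong solution $f$, this promotes it to a true martingale. Equating expectations at $s=0$ and $s=T$ gives
\begin{equation*}
f^i(x,T) = g^i(x,0) = \mathbb{E}\bigl[\,\Phi^i_j(T)\, f_0^j(Y_T)\,\big|\, Y_0 = x\,\bigr].
\end{equation*}

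Reversing time back via $t = T-s$, the pair $\bar Y(t):=Y(T-t)$, $\bar\Phi(t):=\Phi(T-t)$ satisfies $\bar\Phi(T)=I$ and $\dot{\bar\Phi}^i_j(t) = -\bar\Phi^i_k(t)\, S^k_j(\bar Y(t),t)$, which is exactly system \eqref{t-m2} driven by $\bar Y$ in place of $X$. Conditioning on the terminal value $Y_T = \eta$ (equivalently $\bar Y(0)=\eta$), and using Lemma \ref{lemma3.1} to identify $\mathbb{P}[Y_T \in \mathrm{d}\eta \mid Y_0 = x] = p_{-b^T}(0,x,T,\eta)\,\mathrm{d}\eta = p_b(0,\eta,T,x)\,\mathrm{d}\eta$, one obtains
\begin{equation*}
f^i(x,T) = \int_{\mathbb{R}^d} f_0^j(\eta)\, p_b(0,\eta,T,x)\, \mathbb{E}\!\left[\,\bar\Phi^i_j(0)\,\middle|\, \bar Y(0)=\eta,\ \bar Y(T)=x\,\right]\mathrm{d}\eta.
\end{equation*}
Finally, Theorem \ref{Thm-Pinned} applied to the divergence-free field $-b^T$ (whose time reversal is $-b$) identifies the law of the reversed pinned process $\bar Y$ on $\{\bar Y(0)=\eta,\bar Y(T)=x\}$ with the law of the $L_b$-diffusion $X$ pinned at $\{X(0)=\eta,\,X(T)=x\}$; since $\bar\Phi(0)$ is, as a path functional of $\bar Y$, the very same functional that $Q(0)$ is of $X$, this identification replaces the remaining conditional expectation by $\mathbb{E}[Q^i_j(0)\mid X(0)=\eta,X(T)=x]$ and yields \eqref{eq:F-K-f01}.

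The conceptual obstacle is that one cannot apply Itô to $Q(t)\, f(X(t),t)$ directly, because $f$ obeys an equation driven by $L_{-b}$ whereas $X$ has generator $L_b$; the mismatch is precisely what forces the introduction of the auxiliary time-reversed diffusion $Y$ (with the correct generator), after which Theorem \ref{Thm-Pinned} is the tool that pays back for the reversal. The remaining technicalities -- upgrading the local martingale to a true martingale, and the measurable disintegration needed to bring the conditioning on $\bar Y(0)$, $\bar Y(T)$ inside the expectation -- are standard given the global Lipschitz assumption on $S$ and the boundedness and divergence-freeness of $b$.
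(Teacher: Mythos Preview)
Your proposal is correct and follows essentially the same route as the paper's own proof: introduce the $L_{-b^T}$-diffusion $Y$, run the multiplicative Feynman--Kac argument for the time-reversed function (your $g$, the paper's $f(\,\cdot\,,T-t)$) along $Y$ using the gauge $\Phi$ (the paper's $Z$), disintegrate over the terminal value of $Y$, and then invoke Theorem~\ref{Thm-Pinned} together with Lemma~\ref{lemma3.1} to rewrite everything in terms of the forward $L_b$-diffusion $X$ and the backward gauge $Q=\bar\Phi$. The only difference is cosmetic---you spell out the substitution $g(x,s)=f(x,T-s)$ and the local-to-true martingale upgrade explicitly, whereas the paper absorbs these into a direct It\^o computation on $Z(t)f(Y(t),T-t)$.
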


\begin{proof}
The solution of \eqref{eq:S-vot-eq} has a probabilistic representation
in terms of a functional integral, called the Feynman--Kac formula, which
however is in ``backward form''. Let $Y$ be the weak solution to
the following SDE: 
\begin{equation}
\mathrm{d}Y^{i}(t)=-b^{i}(Y(t),T-t)\,\mathrm{d}t+\sqrt{2\nu}\,\mathrm{d}B^{i}(t),\quad Y(0)=\eta,\label{eq:sde-bT1}
\end{equation}
where $t\in[0,T]$ and $i=1,\ldots,d$. That is to say, $Y$ is the
diffusion with infinitesimal generator $L_{-b^{T}}$, where $b^{T}(x,t)=b(x,T-t)$.
Define a gauge functional $Z=(Z_{j}^{i})$ by solving the system of ordinary
differential equations 
\begin{equation}
\frac{\mathrm{d}}{\mathrm{d}t}Z_{j}^{i}(t)=Z_{k}^{i}(t)\,S_{j}^{k}(Y(t),T-t),\quad Z_{j}^{i}(0)=\delta_{j}^{i},\label{eq:Q-v1}
\end{equation}
where $i,j=1,\ldots,d$, whose solution depends on $T$ and $\eta$. Then, by using Itô's formula,
$M^{i}(t)=Z_{k}^{i}(t)f^{k}(Y(t),T-t)$ is a martingale for $t\in[0,T]$:
\[
\mathrm{d}M^{i}(t)=Z_{k}^{i}(t)\,\nabla f^{k}(Y(t),T-t)\cdot \mathrm{d}B(t),
\]
which yields the following Feynman--Kac's formula:
\begin{equation}
f(\eta,T)=\mathbb{E}\!\left[Z(T)f_{0}(Y(T))\right].\label{eq:fey-kac-s1}
\end{equation}
Disintegrating the right-hand side by conditioning on $Y(T)=\zeta$,
we have that
\begin{align*}
f(\eta,T) & =\int_{\mathbb{R}^{d}}\mathbb{E}\!\left[Z(T)f_{0}(Y(T))|Y(T)=\zeta\right]\,\mathbb{P}\left[Y(T)\in \mathrm{d}\zeta\right]\\
 & =\int_{\mathbb{R}^{d}}\mathbb{E}\!\left[Z(T)|Y(0)=\eta,Y(T)=\zeta\right] f_{0}(\zeta)\,p_{-b^{T}}(0,\eta,T,\zeta)\, \mathrm{d}\zeta\\
 & =\int_{\mathbb{R}^{d}}\mathbb{E}\!\left[Z(T)|Y(0)=\eta,Y(T)=\zeta\right]f_{0}(\zeta)\,p_{b}(0,\zeta,T,\eta)\,\mathrm{d}\zeta.
\end{align*}
Let us compute the expectation $\mathbb{P}\left[Z(T)|Y(0)=\eta,Y(T)=\zeta\right]$.
To this end we employ the canonical setting, that is, $\varOmega$
is the path space of all continuous paths in $\mathbb{R}^{d}$ with
time duration $[0,T]$, and $\tau_{T}:\varOmega\rightarrow\varOmega$
is the time reversal map for a path, which maps $w(t)$ into $w(T-t)$. Let $\mathbb{P}_{-b^{T}}^{\eta,0}$
be the law of the diffusion started at $\eta$, with associated infinitesimal
generator $L_{-b^{T}}$. The version of $Z_{t}$ (written as $Z(w,t)$)
in this setting is the unique solution to the system of ordinary differential
equations 
\[
\frac{\mathrm{d}}{\mathrm{d}t}Z_{j}^{i}=Z_{k}^{i}S_{j}^{k}(w(t),T-t),\quad Z_{j}^{i}(w,0)=\delta_{j}^{i},
\]
which can be solved for every continuous path $w\in\varOmega$. By
definition,
\[
Z_{j}^{i}(w,t)=\delta_{j}^{i}+\int_{0}^{t}Z_{k}^{i}(w,s)\,S_{j}^{k}(w(s),T-s)\,\mathrm{d}s,
\]
and therefore 
\[
Z_{j}^{i}(\tau_{T}w,T-t)=\delta_{j}^{i}-\int_{T}^{t}Z_{k}^{i}(\tau_{T}w,T-s)\,S_{j}^{k}(w(s),s)\,\mathrm{d}s.
\]
Let $Q(\cdot,t)=Z(\cdot,T-t)\circ\tau_{T}$. Then $Q$ solves the
following system of ODEs: 
\[
\frac{\mathrm{d}}{\mathrm{d}t}Q_{j}^{i}(w,t)=-Q_{k}^{i}(w,t)\,S_{j}^{k}(w(t),t),\quad Q_{j}^{i}(w,T)=\delta_{j}^{i}
\]
for every continuous path $w\in\varOmega$. By Theorem 3.3,
\begin{align*}
\mathbb{P}_{-b^{T}}^{\eta,0}\left[Z(\cdot,T-t)|w(T)=\zeta\right] & =\mathbb{P}_{b}^{\zeta,0}\left[Z(\cdot,T-t)\circ\tau_{T}|w(T)=\eta\right]\\
 & =\mathbb{P}_{b}^{\zeta,0}\left[Q(\cdot,t)|w(T)=\eta\right],
\end{align*}
which yields the desired conclusion. 
\end{proof}

\section{The random vortex dynamics equations}

In this section we derive the 3D random vortex dynamics SDE. To this end, 
let $u(x,t)$ be the velocity field of a viscous incompressible fluid flow
(with viscosity $\nu>0$) without an external force applied, which moves
freely in $\mathbb{R}^{3}$, and let $\omega:=\nabla\wedge u$. We assume
that the velocity $u(x,t)$ and the vorticity $\omega(x,t)$ decay sufficiently
fast as $|x|\rightarrow\infty$ and are sufficiently
regular. Recall the vorticity equation 
\begin{equation}
\left(\frac{\partial}{\partial t}-L_{-u(x,t)}\right)\omega^{i}(x,t)=S_{j}^{i}(x,t)\,\omega^{j}(x,t),\label{eq:s-vot1}
\end{equation}
where $S$ is the symmetric velocity gradient tensor associated with $u$. Since
$u$ is divergence-free, it follows that $L_{-u}=L_{u}^{\star}$.

Our goal is to rewrite the stochastic equations of Taylor diffusion
with velocity $u(x,t)$ in closed form (i.e., independent of $u$):
\begin{equation}
\mathrm{d}X(\xi,t)=u(X(\xi,t),t)\,\mathrm{d}t+\sqrt{2\nu}\,\mathrm{d}B(t),\quad X(\xi,0)=\xi,\label{eq:X-e1}
\end{equation}
for all $\xi\in\mathbb{R}^{3}$, where $B$ is a three-dimensional
Brownian motion on some probability space $(\varOmega,\mathcal{F},\mathbb{P})$.
In three space dimensions we need to couple it with the following system of ordinary
differential equations to define the gauge functional $G(\xi,t,s)=G(s)=(G_{j}^{i}(s))$:
\begin{equation}
\frac{\mathrm{d}}{\mathrm{d}s}G_{j}^{i}(\xi,t,s)=-G_{k}^{i}(\xi,t,s)\,S_{j}^{k}(X(\xi,s),s),\quad G_{j}^{i}(\xi,t,t)=\delta_{j}^{i},\label{eq:X-e2}
\end{equation}
for $0\leq s\leq t$, where $S_{j}^{k}:=\frac{1}{2}(\frac{\partial u^{k}}{\partial x^{j}}+\frac{\partial u^{j}}{\partial x^{k}})$.
We want to devise a closed SDE system equivalent to \eqref{eq:X-e1}
and \eqref{eq:X-e2}, whose solutions give rise to the velocity field
$u(x,t)$.

By applying the forward Feynman--Kac formula \eqref{eq:F-K-f01} to \eqref{eq:s-vot1},
we obtain the following representation:
\begin{equation}
\omega^{k}(y,t)=\int_{\mathbb{R}^{3}}p_{u}(0,\xi,t,y)\,\omega_{0}^{j}(\xi)\,\mathbb{P}\!\left[G_{j}^{k}(\xi,t,0)|X(\xi,t)=y\right]\mathrm{d}\xi.\label{eq:F-K-f01-1}
\end{equation}
On the other hand, according to the Biot--Savart law 
\[
u^{i}(x,t)=-\int_{\mathbb{R}^{3}}\varepsilon^{ilk}\frac{x^{l}-y^{l}}{4\pi|x-y|^{3}}\,\omega^{k}(y,t)\, \mathrm{d}y.
\]
Substituting $\omega^{k}$ given by \eqref{eq:F-K-f01-1} into the Biot--Savart law, we then have that 
\begin{align*}
u^{i}(x,t) & =\int_{\mathbb{R}^{3}}\int_{\mathbb{R}^{3}}\varepsilon^{ilk}\frac{y^{l}-x^{l}}{4\pi|y-x|^{3}}\,\mathbb{E}\!\left[G_{j}^{k}(\xi,t,0)|X(\xi,t)=y\right]\omega_{0}^{j}(\xi)\,p_{u}(0,\xi,t,y)\,\mathrm{d}\xi \,\mathrm{d}y\\
 & =\int_{\mathbb{R}^{3}}\int_{\mathbb{R}^{3}}\varepsilon^{ilk}\frac{y^{l}-x^{l}}{4\pi|y-x|^{3}}\,\mathbb{E}\!\left[G_{j}^{k}(\xi,t,0)|X(\xi,t)=y\right]\mathbb{P}\left[X(\xi,t)\in \mathrm{d}y\right]\omega_{0}^{j}(\xi)\,\mathrm{d}\xi\\
 & =\int_{\mathbb{R}^{3}}\int_{\mathbb{R}^{3}}\mathbb{E}\!\left[\varepsilon^{ilk}\frac{y^{l}-x^{l}}{4\pi|y-x|^{3}}\,G_{j}^{k}(\xi,t,0)|X(\xi,t)=y\right]\mathbb{P}\left[X(\xi,t)\in \mathrm{d}y\right]\omega_{0}^{j}(\xi)\,\mathrm{d}\xi\\
 & =\int_{\mathbb{R}^{3}}\int_{\mathbb{R}^{3}}\!\mathbb{E}\!\left[\varepsilon^{ilk}\frac{X(\xi,t)^{l}-x^{l}}{4\pi|X(\xi,t)-x|^{3}}\,G_{j}^{k}(\xi,t,0)|X(\xi,t)=y\right]\mathbb{P}\left[X(\xi,t)\in \mathrm{d}y\right]\omega_{0}^{j}(\xi)\,\mathrm{d}\xi\\
 & =\int_{\mathbb{R}^{3}}\mathbb{E}\!\left[\varepsilon^{ilk}\frac{X(\xi,t)^{l}-x^{l}}{4\pi|X(\xi,t)-x|^{3}}\,G_{j}^{k}(\xi,t,0)\right]\omega_{0}^{j}(\xi)\,\mathrm{d}\xi,
\end{align*}
which expresses the velocity $u(x,t)$ in terms of the law of $(X,G)$.
To close the SDE system, we need to perform a similar computation
for the symmetric tensor $S=(S_{j}^{i})$. From the Biot--Savart law we obtain 
\begin{equation}
S_{j}^{k}(x,t)=\int_{\mathbb{R}^{3}}H_{j,i}^{k}(x-y)\,\omega^{i}(y,t)\,\mathrm{d}y,\label{eq:der-BS0}
\end{equation}
where the singular integral kernel $H_{j,i}^{k}$ is defined by
\begin{equation}
H_{j,i}^{k}(z):=\frac{3}{2}\frac{z^{l}}{4\pi|z|^{5}}\left(\varepsilon^{kli}z^{j}+\varepsilon^{jli}z^{k}\right).\label{eq:H-kernel0}
\end{equation}
Hence 
\begin{align*}
S_{j}^{k}(x,t) & =\int_{\mathbb{R}^{3}}\int_{\mathbb{R}^{3}}H_{j,i}^{k}(x-y)\,\mathbb{E}\!\left[G_{l}^{i}(\xi,t,0)|X(\xi,t)=y\right]p_{u}(0,\xi,t,y)\,\omega_{0}^{l}(\xi)\,\mathrm{d}y\,\mathrm{d}\xi\\
 & =\int_{\mathbb{R}^{3}}\int_{\mathbb{R}^{3}}H_{j,i}^{k}(x-y)\,\mathbb{E}\!\left[G_{l}^{i}(\xi,t,0)|X(\xi,t)=y\right]\mathbb{P}\left[X(\xi,t)\in \mathrm{d}y\right]\omega_{0}^{l}(\xi)\,\mathrm{d}\xi\\
 & =\int_{\mathbb{R}^{3}}\int_{\mathbb{R}^{3}}\mathbb{E}\!\left[H_{j,i}^{k}(x-y)\,G_{l}^{i}(\xi,t,0)|X(\xi,t)=y\right]\mathbb{P}\left[X(\xi,t)\in \mathrm{d}y\right]\omega_{0}^{l}(\xi)\,\mathrm{d}\xi\\
 & =\int_{\mathbb{R}^{3}}\int_{\mathbb{R}^{3}}\mathbb{E}\!\left[H_{j,i}^{k}(x-X(\xi,t))\,G_{l}^{i}(\xi,t,0)|X(\xi,t)=y\right]\mathbb{P}\left[X(\xi,t)\in \mathrm{d}y\right]\omega_{0}^{l}(\xi)\,\mathrm{d}\xi\\
 & =\int_{\mathbb{R}^{3}}\mathbb{E}\!\left[H_{j,i}^{k}(x-X(\xi,t))\,G_{l}^{i}(\xi,t,0)\right]\omega_{0}^{l}(\xi)\,\mathrm{d}\xi.
\end{align*}

Now we are able to introduce the following notions and notation,
which allow us to formulate a closed system of SDEs equivalent to
the Taylor diffusion equation.

Assume that $X(\xi,t)=(X^{i}(\xi,t))$ is a family of $\mathbb{R}^{d}$-valued
continuous semi-martingales, and suppose that $G(\xi,s,t)=\left(G_{j}^{i}(\xi,s,t)\right)$
(for each $\xi$ and $t>0$) is a family of $d\times d$ symmetric
matrix-valued continuous processes, differentiable in $s$, adapted
to the filtration generated by $X$; we then define 
\[
b_{(X,G)}^{i}(x,t):=\int_{\mathbb{R}^{3}}\mathbb{E}\!\left[\varepsilon^{ilk}\frac{X(\xi,t)^{l}-x^{l}}{4\pi|X(\xi,t)-x|^{3}}\,G_{\alpha}^{k}(\xi,t,0)\right]\omega_{0}^{\alpha}(\xi)\,\mathrm{d}\xi
\]
and 
\[
S_{(X,G),j}^{k}(x,t):=\int_{\mathbb{R}^{3}}\mathbb{E}\!\left[H_{j,\alpha}^{k}(x-X(\xi,t))\,G_{l}^{\alpha}(\xi,t,0)\right]\omega_{0}^{l}(\xi)\,\mathrm{d}\xi.
\]
What is important here is the fact that $b_{(X,G)}^{i}$ and $S_{(X,G),j}^{k}$
are determined by the joint distribution of $(X,G)$.

The random vortex system may now be formulated as follows: 
\begin{equation}
\mathrm{d}X^{k}(\xi,t)=b_{(X,G)}^{k}(X(\xi,t),t)\,\mathrm{d}t+\sqrt{2\nu}\,\mathrm{d}B^{k}(t),\quad X(\xi,0)=\xi\label{eq:dis-1}
\end{equation}
and 
\begin{equation}
\frac{\mathrm{d}}{\mathrm{d}s}G_{j}^{i}(\xi,t,s)=-G_{l}^{i}(\xi,t,s)\,S_{(X,G),j}^{l}(X(\xi,s),s),\quad G_{j}^{i}(x,t,t)=\delta_{j}^{i}\label{eq:dis-2}
\end{equation}
for $0\leq s\leq t$ and for all $t\geq0$, where $i,j,k=1,2,3$. 

Equations \eqref{eq:dis-1} and \eqref{eq:dis-2} together represent
a closed system of distributional stochastic differential equations.
The first equation \eqref{eq:dis-1} is a forward type distributional
SDE, and the second equation is a system of ordinary differential equations.
The system (\ref{eq:dis-1}), (\ref{eq:dis-2}) is not Markovian
but a system of ordinary functional differential equations (cf.
\cite{Hale1977}). 
\begin{lem}
If $b_{(X,G)}(x,t)$ is $C^{1}$, then 
\[
S_{(X,G),j}^{k}=\frac{1}{2}\left(\frac{\partial b_{(X,G)}^{k}}{\partial x^{j}}+\frac{\partial b_{(X,G)}^{j}}{\partial x^{k}}\right).
\]
\end{lem}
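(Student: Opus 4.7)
The approach is a direct calculation: differentiate the expression for $b_{(X,G)}^k(x,t)$ in $x^j$, symmetrize, and match the result against the definition of $S^k_{(X,G),j}(x,t)$. The underlying reason this works is that $H^k_{j,i}$ is (up to sign) precisely the symmetrized spatial gradient of the Biot--Savart kernel, so the lemma is essentially the pathwise manifestation of the classical identity $S_j^k=\tfrac12(\partial_ju^k+\partial_ku^j)$, with the vorticity represented in functional form through the pair $(X,G)$.

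First I would justify passing $\partial/\partial x^j$ through the integral in $\xi$ and through the expectation $\mathbb{E}$ in the definition of $b_{(X,G)}^k$. Assuming sufficient integrability of $\omega_0$ and of $G$ (tacitly guaranteed by the $C^1$ hypothesis on $b_{(X,G)}$, which we use as the working condition that dominated convergence applies), one obtains
\[
\frac{\partial b_{(X,G)}^{k}}{\partial x^{j}}(x,t)=\int_{\mathbb{R}^{3}}\mathbb{E}\!\left[\varepsilon^{klm}\,\frac{\partial}{\partial x^{j}}\!\left(\frac{X^{l}(\xi,t)-x^{l}}{4\pi|X(\xi,t)-x|^{3}}\right)G_{\alpha}^{m}(\xi,t,0)\right]\omega_{0}^{\alpha}(\xi)\,\mathrm{d}\xi.
\]
Setting $z=X(\xi,t)-x$, the straightforward identity
\[
\frac{\partial}{\partial x^{j}}\!\left(\frac{z^{l}}{4\pi|z|^{3}}\right)=\frac{1}{4\pi}\!\left(-\frac{\delta_{j}^{l}}{|z|^{3}}+\frac{3z^{l}z^{j}}{|z|^{5}}\right)
\]
then yields an explicit formula for $\partial b_{(X,G)}^{k}/\partial x^{j}$ with two terms, one containing $\varepsilon^{kjm}/|z|^{3}$ and the other containing $\varepsilon^{klm}z^{l}z^{j}/|z|^{5}$.

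Next I would form the symmetric combination $\tfrac12(\partial_{j}b_{(X,G)}^{k}+\partial_{k}b_{(X,G)}^{j})$. By the total antisymmetry of the Levi--Civita symbol, $\varepsilon^{kjm}+\varepsilon^{jkm}=0$, so the $|z|^{-3}$ terms drop out, leaving
\[
\frac{1}{2}\!\left(\frac{\partial b_{(X,G)}^{k}}{\partial x^{j}}+\frac{\partial b_{(X,G)}^{j}}{\partial x^{k}}\right)=\int_{\mathbb{R}^{3}}\mathbb{E}\!\left[\frac{3z^{p}}{2\cdot4\pi|z|^{5}}\bigl(z^{j}\varepsilon^{kp\alpha}+z^{k}\varepsilon^{jp\alpha}\bigr)G_{l}^{\alpha}(\xi,t,0)\right]\omega_{0}^{l}(\xi)\,\mathrm{d}\xi.
\]
Finally I would compare with the definition of $H_{j,\alpha}^{k}$: since $x-X(\xi,t)=-z$, the evaluation
\[
H_{j,\alpha}^{k}(x-X(\xi,t))=\frac{3}{2}\frac{z^{p}}{4\pi|z|^{5}}\bigl(\varepsilon^{kp\alpha}z^{j}+\varepsilon^{jp\alpha}z^{k}\bigr),
\]
(two sign flips cancel in the overall coefficient), and inserting this into the formula for $S_{(X,G),j}^{k}$ shows that the two expressions agree term by term.

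The main obstacle is the legitimacy of differentiating under the $\xi$-integral and the expectation $\mathbb{E}$, because the Biot--Savart kernel and its derivative are singular where $X(\xi,t)=x$. The $C^{1}$ hypothesis on $b_{(X,G)}$ sidesteps this by assumption: it asserts that the derivative exists classically, so it suffices to verify that the derivative produced by the formal calculation is the unique candidate. One clean way to confirm this is to regularise by truncating the singular kernel (say, multiplying by a smooth cut-off of $|z|\geq\varepsilon$), carry out the manipulation above for each regularised field, and pass to the limit $\varepsilon\downarrow 0$ using the $C^{1}$ regularity of $b_{(X,G)}$ together with principal-value cancellations coming from the odd structure of the kernel. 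Beyond this analytical point, the algebra is entirely routine.
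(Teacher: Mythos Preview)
Your proposal is correct and is exactly the computation the paper has in mind: the paper offers no detailed argument at all, stating only that ``this result is of course a direct consequence of our definition.'' Your direct differentiation of the Biot--Savart kernel, symmetrisation using $\varepsilon^{kjm}+\varepsilon^{jkm}=0$, and identification with $H^{k}_{j,\alpha}(x-X(\xi,t))$ (with the two sign flips from $x-X=-z$ cancelling) is precisely the routine verification that underlies that one-line remark.
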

This result is of course a direct consequence of our definition.

\section{Direct numerical scheme}

Thanks to the closed form of our random vortex dynamical system (\ref{eq:dis-1}),
(\ref{eq:dis-2}), we are now able to devise a numerical scheme for
computing numerical approximations to solutions of the Navier--Stokes 
equations by performing Monte Carlo simulations.

A direct numerical scheme based on the closed random vortex dynamical
system (\ref{eq:dis-1}), (\ref{eq:dis-2}) can be constructed by approximating
the finite-dimensional integral and replacing the expectation by an
algebraic average (which can be justified by the strong law of large numbers)
appearing in this system. We will call this scheme a direct numerical
scheme based on the exact random vortex dynamics. Because this will be
beneficial in future studies, we break up the scheme into several steps.

We begin by recalling the singular kernels
\[
K_{k}^{i}(z):=-\varepsilon^{ilk}\frac{z^{l}}{4\pi|z|^{3}},\quad H_{j,i}^{k}(z):=\frac{3}{2}\frac{z^{l}}{4\pi|z|^{5}}\left(\varepsilon^{kli}z^{j}+\varepsilon^{jli}z^{k}\right).
\]

Let $h>0$ be the spatial lattice size in our numerical scheme, and let $x_{\boldsymbol{k}}$
be the lattice points defined by $x_{\boldsymbol{k}}:=h\boldsymbol{k}$ with $\boldsymbol{k}:=(k^{1},k^{2},k^{3})\in\mathbb{Z}^{3}$.
Let $X_{t}^{\boldsymbol{k}}:=X(x_{\boldsymbol{k}},t)$, $G_{j}^{\boldsymbol{k},l}(t,s):=G_{j}^{l}(x^{\boldsymbol{k}},t,s)$
(where $t\geq s\geq0$) and $\omega_{\boldsymbol{k}}^{j}:=\omega_{0}^{j}(x_{\boldsymbol{k}})$.
The first step is to replace the integrals on $\mathbb{R}^{3}$ with
the corresponding Riemann sums, so that
\[
b_{(X,G)}^{i}(x,t)=\int_{\mathbb{R}^{3}}\mathbb{E}\!\left[K_{k}^{i}(x-X(\xi,t))\,G_{j}^{k}(\xi,t,0)\right]\omega_{0}^{j}(\xi)\,\mathrm{d}\xi
\]
is approximated by the following discrete sum:
\[
h^{3}\sum_{\boldsymbol{k}\in\mathbb{Z}^{3}}\omega_{\boldsymbol{k}}^{j}\,\mathbb{E}\!\left[K_{l}^{i}(x-X_{t}^{\boldsymbol{k}})\,G_{j}^{\boldsymbol{k},l}(t,0)\right],
\]
and
\[
S_{(X,G),j}^{l}(x,t)=\int_{\mathbb{R}^{3}}\mathbb{E}\!\left[H_{j,\alpha}^{l}(x-X(\xi,t))\,G_{\beta}^{\alpha}(\xi,t,0)\right]\omega_{0}^{\beta}(\xi)\,\mathrm{d}\xi
\]
is approximated by the following discrete sum:
\[
h^{3}\sum_{\boldsymbol{k}\in\mathbb{Z}^{3}}\omega_{\boldsymbol{k}}^{\beta}\,\mathbb{E}\!\left[H_{j,\alpha}^{l}(x-X_{t}^{\boldsymbol{k}})\,G_{\beta}^{\boldsymbol{k},\alpha}(t,0)\right].
\]
Therefore, on a spatial mesh of spacing $h>0$, we may run the following approximate
random vortex system:
\begin{equation}
\mathrm{d}X_{t}^{\boldsymbol{i},i}=h^{3}\left.\sum_{\boldsymbol{k}\in\mathbb{Z}^{3}}\omega_{\boldsymbol{k}}^{j}\,\mathbb{E}\!\left[K_{l}^{i}(x-X_{t}^{\boldsymbol{k}})\,G_{j}^{\boldsymbol{k},l}(t,0)\right]\right|_{x=X_{t}^{\boldsymbol{i}}}\,\mathrm{d}t+\sqrt{2\nu}\,\mathrm{d}B_{t}\label{step1-dm1}
\end{equation}
with the initial condition $X_{0}^{\boldsymbol{i}}=x_{\boldsymbol{i}}$,
where $t\geq0$, and
\begin{equation}
\frac{\mathrm{d}}{\mathrm{d}s}G_{j}^{\boldsymbol{l},i}(t,s)=-G_{l}^{\boldsymbol{l},i}(t,s)\,h^{3}\left.\sum_{\boldsymbol{k}\in\mathbb{Z}^{3}}\omega_{\boldsymbol{k}}^{\beta}\,\mathbb{E}\!\left[H_{j,\alpha}^{l}(x-X_{s}^{\boldsymbol{k}})\,G_{\beta}^{\boldsymbol{k},\alpha}(s,0)\right]\right|_{x=X_{s}^{\boldsymbol{l}}}\label{step2-dm2}
\end{equation}
with the terminal value $G_{j}^{\boldsymbol{l},i}(t,t)=\delta_{j}^{i}$,
where $0\leq s\leq t$ for $t\geq0$.

In the next step we substitute the expectations by running $N$ independent
copies via independent 3D Brownian motions $B^{\rho}$, to define
the following stochastic differential equations
\begin{equation}
\mathrm{d}X_{t}^{(\boldsymbol{i},\rho),i}=\frac{h^{3}}{N}\sum_{\boldsymbol{k}\neq\boldsymbol{i}}\omega_{\boldsymbol{k}}^{\beta}\sum_{\sigma=1}^{N}\left[K_{\alpha}^{i}(X_{t}^{(\boldsymbol{i},\rho)}-X_{t}^{(\boldsymbol{k},\sigma)})\,G_{\beta}^{(\boldsymbol{k},\sigma),\alpha}(t,0)\right]\!\mathrm{d}t+\sqrt{2\nu}\,\mathrm{d}B_{t}^{\rho},\label{rv-m1}
\end{equation}
\begin{equation}
X_{0}^{(\boldsymbol{i},\rho)}=x_{\boldsymbol{i}},\quad\rho=1,\ldots,N,\label{rv-m2}
\end{equation}
coupled to
\begin{equation}
\hspace{-0.8mm}\frac{\mathrm{d}}{\mathrm{d}s}G_{j}^{(\boldsymbol{l},\lambda),i}(t,s)\!=\!-\frac{h^{3}}{N}\,G_{l}^{(\boldsymbol{l},\lambda),i}(t,s)\!\sum_{\boldsymbol{k}\neq\boldsymbol{l}}\omega_{\boldsymbol{k}}^{\beta}\sum_{\sigma=1}^{N}\!\left[H_{j,\alpha}^{l}(X_{s}^{(\boldsymbol{l},\lambda)}-X_{s}^{(\boldsymbol{k},\sigma)})\,G_{\beta}^{(\boldsymbol{k},\sigma),\alpha}(s,0)\right]\!,\label{rv-m3}
\end{equation}
\begin{equation}
G_{j}^{(\boldsymbol{l},\lambda),i}(t,t)=\delta_{j}^{i}\label{rv-m4}
\end{equation}
for $0\leq s\leq t$. The approximate velocity is therefore given
by
\[
\tilde{u}(x,t)=\frac{h^{3}}{N}\sum_{\boldsymbol{k}\neq\boldsymbol{i}}\omega_{\boldsymbol{k}}^{\beta}\sum_{\sigma=1}^{N}\left[K_{\alpha}^{i}(x-X_{t}^{(\boldsymbol{k},\sigma)})\,G_{\beta}^{(\boldsymbol{k},\sigma),\alpha}(t,0)\right].
\]

The rate of convergence of the sequence of solutions of the discretisation scheme (\ref{rv-m1}), (\ref{rv-m2}) towards the solution of the random vortex dynamics system (\ref{eq:dis-1}), (\ref{eq:dis-2}) is an interesting question to consider. Since the random vortex dynamics system involves a distributional stochastic differential equation, Wong--Zakai type convergence results for ordinary It\^o
stochastic differential equations are not applicable to the system (\ref{eq:dis-1}), (\ref{eq:dis-2}). What we can certainly say at this stage regardless is that the convergence rate will necessarily depend on the regularity of the solution to the Navier--Stokes equations. The convergence analysis of the scheme is both interesting and relevant, but is beyond the scope of the present paper and will be explored in a separate paper. 

\section{Simulation results}

According to equations (\ref{rv-m1})--(\ref{rv-m4}), the numerical algorithm is implemented as follows.

\begin{enumerate}

    \item We select initial data $(x_i,
    \omega_i)$, $i=1,2,\dots,n$. We scale $\omega_i$ by the factor $h^3$ here, so there is no need to include the constant $h^3$ in our numerical scheme.
    \item An iterative algorithm is used to solve the equations. First, the time interval $[0,T]$ is partitioned into $0=t_0\leq t_1\leq \dots\leq t_m=T$, where the partitions are of the same length $\frac{T}{m}$, and then we sample the $i.i.d.$ white noise $Z_{t_r}^{\rho}\sim N(0,\frac{T}{m}I_3)$ at $t_0,t_1,\dots, t_{m-1}$, and for each $1\leq \rho\leq N$, where $N$ is the number of independent copies.
    \item We aim to compute $(X_{t_r}^{(\boldsymbol{i},\rho)},G_{t_r,t_{r'}}^{(\boldsymbol{i},\rho)})$, where $0\leq r'\leq r \leq m, 1\leq \boldsymbol{i} \leq n,1\leq \rho\leq N$, such that it approximates the solution of (\ref{rv-m1}), (\ref{rv-m2}). That is,
\begin{align}\label{eq:numerical X}
\begin{aligned}
\hspace{-2mm}X_{t_{r+1}}^{(\boldsymbol{i},\rho),i}-X_{t_{r}}^{(\boldsymbol{i},\rho),i} &=\sum_{\boldsymbol{k}\neq\boldsymbol{i}}\frac{\omega_{\boldsymbol{k}}^{\beta}}{N}\sum_{\sigma=1}^{N}\left[K_{\alpha}^{i}(X_{t_r}^{(\boldsymbol{i},\rho)}-X_{t_r}^{(\boldsymbol{k},\sigma)})\,G_{\beta}^{(\boldsymbol{k},\sigma),\alpha}(t_r,0)\right]\Delta t+\sqrt{2\nu}\,Z_{t_r}^{\rho},\\X_{0}^{(\boldsymbol{i},\rho)}&=x_{\boldsymbol{i}},
\end{aligned}
\end{align}
\begin{align}\label{eq:numerical G}
\begin{aligned}
&G_{j}^{(\boldsymbol{l},\lambda),i}(t_r,t_{r'})-G_{j}^{(\boldsymbol{l},\lambda),i}(t_r,t_{r'-1})\\
&\qquad =-G_{l}^{(\boldsymbol{l},\lambda),i}(t_r,t_{r'})\sum_{\boldsymbol{k}\neq\boldsymbol{l}}\frac{\omega_{\boldsymbol{k}}^{\beta}}{N}\sum_{\sigma=1}^{N}\left[H_{j,\alpha}^{l}(X_{t_{r'}}^{(\boldsymbol{l},\lambda)}-X_{t_{r'}}^{(\boldsymbol{k},\sigma)})G_{\beta}^{(\boldsymbol{k},\sigma),\alpha}(t_{r'},0)\right]\Delta t
\end{aligned}
\end{align}
and $G_{j}^{(\boldsymbol{l},\lambda),i}(t_r,t_r)=\delta_{j}^{i}$ for $0\leq r\leq m. $

\item We observe that we are essentially computing $G(t,s)$ in the simplex $0\leq t_i\leq t_j\leq T$, where $0\leq i\leq j\leq m$. Thus, the iterations are constructed as follows: we start with the initial values $G_{j}^{(\boldsymbol{l},\lambda),i}(t_r,t_{r'})=\delta_j^i$, where $0\leq r'\leq r\leq m$. Based on this, $X$ is computed from \eqref{eq:numerical X}, and this pair $(X,G)$ is denoted by $(X,G)^{(0)}$. Iteratively,  given the pair $(X,G)^{(n)}$, the next iterate, $(X,G)^{(n+1)}$,  is defined by the following equations:
\begin{align}\label{eq:numerical2 G}
\begin{aligned}
&\bigg[G_{j}^{(\boldsymbol{l},\lambda),i}(t_r,t_{r'})-G_{j}^{(\boldsymbol{l},\lambda),i}(t_r,t_{r'-1})\bigg]^{(n+1)}
\\&=-G_{l}^{(\boldsymbol{l},\lambda),i}(t_r,t_{r'})^{(n+1)}\sum_{\boldsymbol{k}\neq\boldsymbol{l}}\frac{\omega_{\boldsymbol{k}}^{\beta}}{N}\sum_{\sigma=1}^{N}\left[H_{j,\alpha}^{l}(X_{t_{r'}}^{(\boldsymbol{l},\lambda)}-X_{t_{r'}}^{(\boldsymbol{k},\sigma)})\,G_{\beta}^{(\boldsymbol{k},\sigma),\alpha}(t_{r'},0)\right]^{(n)}\Delta t,
\end{aligned}
\end{align}
\begin{align}\label{eq:numerical2 X}
\begin{aligned}
&\bigg[X_{t_{r+1}}^{(\boldsymbol{i},\rho),i}-X_{t_{r}}^{(\boldsymbol{i},\rho),i}\bigg]^{(n+1)}  \\ &\quad = \sum_{\boldsymbol{k}\neq\boldsymbol{i}}\frac{\omega_{\boldsymbol{k}}^{\beta}}{N}\sum_{\sigma=1}^{N}\left[K_{\alpha}^{i}(X_{t_r}^{(\boldsymbol{i},\rho)}-X_{t_r}^{(\boldsymbol{k},\sigma)})\,G_{\beta}^{(\boldsymbol{k},\sigma),\alpha}(t_r,0)\right]^{(n+1)}\Delta t+\sqrt{2\nu}\,\mathrm{d}Z_{t_r}^{\rho}.
\end{aligned}
\end{align}

To be more precise, the equation \eqref{eq:numerical2 G} is solved backwards. In other words, we first let $[G_{j}^{(\boldsymbol{l},\lambda),i}(t_r,t_r)]^{(n+1)}=\delta_j^i$ for all $0\leq r \leq m$, and we then compute $[G_{j}^{(\boldsymbol{l},\lambda),i}(t_r,t_{r'})]^{(n+1)}$ based on \eqref{eq:numerical2 G} for each $r'$ from $r-1$ to $0$. Afterwards, based on \eqref{eq:numerical2 X}, we compute $[X_{t_{r}}^{(\boldsymbol{i},\rho,i)}]^{(n+1)}$ for each $r$ from $0$ to $m$.

\item We terminate the iterative process once the update from $(X,G)^{(n)}$ to $(X,G)^{(n+1)}$ is sufficiently small. In the following experiments, we terminated the iterative process once 
\[\sum_{0\leq r\leq m}\lVert X^{(n+1)}_{t_r}-X^{(n)}_{t_r}\rVert_X^2+\sum_{0\leq r\leq m}\lVert G^{(n+1)}(t_r,0)-G^{(n)}(t_r,0)\rVert_G^2\leq 10^{-7},
\]
where $\lVert \cdot\rVert_X^2$ means summing the squares through all possible $(\boldsymbol{i},\rho),i$ as the indices in \eqref{eq:numerical X} and $\lVert \cdot\rVert_G^2$ means summing the squares through all possible ${j},(\boldsymbol{l},\lambda),i$ as the indices in \eqref{eq:numerical G}.

\end{enumerate}

The computational complexity of one step in the iteration is $\mathcal{O}(m (Nn)^2)$, since at each time point, on the right-hand sides of \eqref{eq:numerical2 X} and \eqref{eq:numerical2 G}, we essentially compute $Nn$ matrices and then sum them up, and we need to execute this computation for all $1\leq \boldsymbol{i} \leq n,1\leq \rho\leq N$. Therefore, the overall complexity of this algorithm is $\mathcal{O}(mN^2n^2N_{\text{iter}})$, where $N_{\text{iter}}$ is the number of iterations that are performed.

In our actual experiments, because of the time discretisation, there is a possibility that $X_{t_r}^{(i,\rho)}$ might be close to $X_{t_r}^{(k,\sigma)}$  for some $i\neq k$, which could lead to an arithmetic overflow, because of the presence of singularities in the kernels $H$ and $K$ at zero. Therefore, mollifications of the singular kernels $H$ and $K$ are used in the  experiments reported below in order to avoid the appearance of overflow, and we sum through all $k$ instead of $k\neq l$ in \eqref{eq:numerical2 G} or $k\neq i$ in \eqref{eq:numerical2 X}, as there are no singularities in the mollified kernels. We ran the experiments on an AMD EPYC 7713 64-Core Processor, and the computations in \eqref{eq:numerical2 G} for $1\leq \boldsymbol{l} \leq n$, $1\leq \lambda\leq N$  and in  \eqref{eq:numerical2 X} for $1\leq \boldsymbol{i} \leq n,1\leq \rho\leq N$ were parallelised.

In the following simulations the time interval was in all cases (unless otherwise stated) partitioned into subintervals of length $0.02$ and the viscosity $\nu$ was set to $\frac{1}{2}$. The aim of the first simulation is to present a simple 3D visualisation of the computed streamlines of the flow emanating from a chosen initial condition. In our second and third experiment we compared our method with an exact solution and with another numerical scheme, respectively, to validate the proposed algorithm. For the sake of clarity, in the second and third experiment we chose to depict in the corresponding figures snapshots of the evolving velocity field, projected on the horizontal plane $z=0$.

In the first simulation (cf. Fig. \ref{figure:1}) the initial data are chosen as
\begin{equation*}
  (X_0^1, \omega_0^1) = ((0,0,0),(0,0,0.1)),\quad (X_0^2, \omega_0^2) = ((0,1,0),(0,0,-0.1))
\end{equation*}
and
\begin{equation*}
 (X_0^3, \omega_0^3) = ((0,0,1),(0,0.1,0)), 
\end{equation*}
each with 100 independent copies. Because of the diffusion, additional swirling eddies appear in the flow, and the magnitude of the velocity is diffused into the whole flow.

In the second simulation, we assess our numerical scheme by considering the  known evolution of the Lamb--Oseen vortex, a line vortex that decays because of the presence of viscosity. Although the Lamb--Oseen vortex is typically regarded as a 2D solution with initial vorticity $\delta(x)\delta(y)$ for $(x,y)\in \mathbb{R}^2$, we consider it here as a 3D vortex solution with initial vorticity $(0,0,\delta(x)\delta(y))$ for $(x,y,z)\in \mathbb{R}^3$. Then, the exact solution of the resulting 3D Lamb--Oseen vortex evolution problem, with $\nu=\frac{1}{2}$, is given by
\begin{equation}
    u(x,y,z,t) = \frac{1}{2\pi}\frac{(-y,x,0)}{x^2+y^2}(1-\mathrm{e}^{-\frac{x^2+y^2}{2t}}),\qquad (x,y,z) \in \mathbb{R}^3,\; t>0.
\end{equation}
 In our numerical simulations the evolution of the computed solution is initialized at $x_k=(0,0,k/2)$ with $\omega_k=(0,0,1/2)$ for $-20\leq k\leq 20$, and we consider the velocity field projected on the plane  $z=0$ when $t=0.1$ (cf. Fig. \ref{figure:2}). It is clear from the figures that the resemblance between the exact velocity field depicted in {Fig.~\ref{figure:2} (a)}
and the computed velocity field with $N=1$, $20$ and $100$ independent copies, shown in Fig. \ref{figure:2} (b), (c), (d), respectively, improves as $N$ increases. To compare the computed velocity fields with the exact velocity field more precisely, a lattice consisting of 25 points in the horizontal plane, $z=0$, was chosen; the values at these points of the exact and approximate velocity fields projected on the  plane $z=0$ are shown in Table \ref{tab:1}. In addition, we have calculated the error between the exact solution and the approximate solution in the discrete counterpart of the $L^1_{[0,1]^3}$ norm. More precisely, we sum $|u_{\text{approx}}- u_{\text{exact}}|$ over the lattice $(\frac{i}{10},\frac{j}{10},\frac{k}{10})_{-10\leq i,j,k\leq 9}$ and then multiply the resulting sum by $0.1^3$, where $u_{\text{approx}}$ is the velocity field calculated using our scheme, while $u_{\text{exact}}$ is the exact velocity field. In Table 7.1, only the first two coordinates of the 25 lattice points and the first two components of the three-component vectors of the exact and approximate velocity fields at the 25 lattice points are shown. The third component of the exact velocity field is identically $0$; for the values of the approximate velocity fields at the lattice points, listed in the third, fourth and fifth columns of the table, the third component is not exactly $0$, but was in all cases less than $10^{-4}$.

We apply the numerical scheme, slightly modified,   instead of \eqref{rv-m1} and \eqref{rv-m3}. More precisely we use the following scheme:
\begin{equation}
\mathrm{d}X_{t}^{(\boldsymbol{i},\rho),i}=\frac{h^{3}}{N}\sum_{\boldsymbol{k}\neq\boldsymbol{i}}\omega_{\boldsymbol{k}}^{\beta}\sum_{\sigma=1}^{N}\left[K_{\alpha}^{i}(X_{t}^{(\boldsymbol{i},\rho)}-X_{t}^{(\boldsymbol{k},\sigma)})\,G_{\beta}^{(\boldsymbol{k},\sigma),\alpha}(t,0)\right]\!\mathrm{d}t+\sqrt{2\nu}\,\mathrm{d}B_{t}^{i,\rho},
\end{equation}
\begin{equation}
\hspace{-0.8mm}\frac{\mathrm{d}}{\mathrm{d}s}G_{j}^{(\boldsymbol{l},\lambda),i}(t,s)\!=\!-\frac{h^{3}}{N}\,G_{l}^{(\boldsymbol{l},\lambda),i}(t,s)\!\sum_{\boldsymbol{k}\neq\boldsymbol{l}}\omega_{\boldsymbol{k}}^{\beta}\sum_{\sigma=1}^{N}\!\left[H_{j,\alpha}^{l}(X_{s}^{(\boldsymbol{l},\lambda)}-X_{s}^{(\boldsymbol{k},\sigma)})\,G_{\beta}^{(\boldsymbol{k},\sigma),\alpha}(s,0)\right].
\end{equation}
There are $N\rho$ independent Brownian motions instead of $\rho$ independent Brownian motions for the whole system and $X^{i,\rho}$ are driven by independent Brownian motions. A numerical algorithm is developed based on this system by time discretization. The numerical experiment with this scheme helps avoid using a large number of copies but still gives a stable result.  This scheme however reduces the computational costs. 

The scheme is  tested for the Taylor--Green vortex with the Reynolds number set to 1600. The initial velocity has components $\cos(x)\sin(y)\sin(z)$, $-\sin(x)\cos(y)\sin(z)$ and $0$. The simulation is initialized with the lattice points $(\frac{\pi}{16}i,\frac{\pi}{16}j,\frac{\pi}{16}k)_{-16\leq i,j,k\leq 48}$, and $N=1$. We compare the evolution of the velocity field projected on the plane $z=0.1$. We see that the simulation results resemble the real solution quite well (cf. Fig. \ref{figure:new1}). In order to explore the improvement resulting from reduction of the step size, we conducted another simulation with the same initialization but with step size $\Delta t=0.01$ instead of $0.02$ (cf. Fig. \ref{figure:new1}). In Table \ref{tab:2}, a lattice consisting of 25 points on this plane was chosen, and the values at these points of the exact and approximate velocity fields projected on the  plane $z=0.1$  are shown. In addition, we have calculated the $L^1_{[\frac{\pi}{2},\frac{3\pi}{2}]^2}$ norm error between the exact solution and the approximate solutions. It is clear that the accuracy of the approximation is improved by reducing the step size $\Delta t$.

Next, the scheme is tested in the case of artificial homogeneous isotropic turbulence. The turbulence is initiated by a solenoidal velocity field
\begin{equation}
\begin{aligned}
u(x,y,z) = (&\cos(x)\sin(y)\sin(z)+\cos(x)\sin(y)\sin(2z),\\&-\sin(x)\cos(y)\sin(z)+\sin(x)\cos(y)\sin(2z),\\&-\sin(x)\sin(y)\cos(2z)).
\end{aligned}
\end{equation}
We compare our numerical results with the results obtained by DNS. The DNS is based on a pseudo-spectral Fourier--Galerkin method \cite{canuto1988spectral}. The  code for the implementation of DNS is based on the algorithm suggested in \cite{orszag1972numerical,mortensen2016high}. The Reynolds number is  1600 as in the numerical verification in \cite{mortensen2016high}, so that the viscosity $\nu$ is $\frac{1}{1600}$, accordingly. The simulation is initialized using  the lattice points $(\frac{\pi}{16}i,\frac{\pi}{16}j,\frac{\pi}{16}k)_{-16\leq i,j,k\leq 48}$ and $N=1$. We compare the evolution of the velocity field projected onto the plane $z=0.1$.  The simulation and the DNS solution produce very similar results (cf. Fig. \ref{figure:new2}).

\begin{figure}[ht]
\centering
    \begin{subfigure}{0.3\textwidth}
        \centering
        \includegraphics[width=1\linewidth]{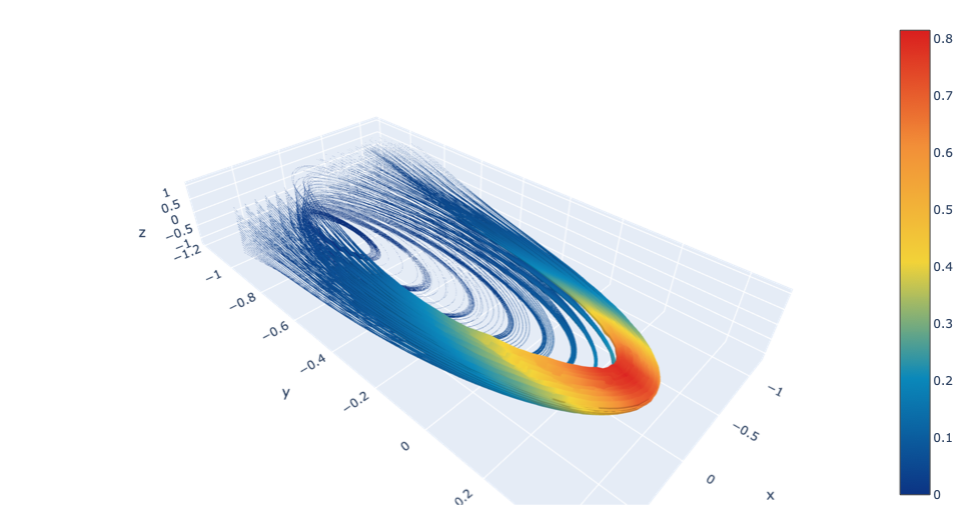}
        \caption{$t=0$} 
    \end{subfigure} 
    \begin{subfigure}{0.3\textwidth}
        \centering
        \includegraphics[width=1\linewidth]{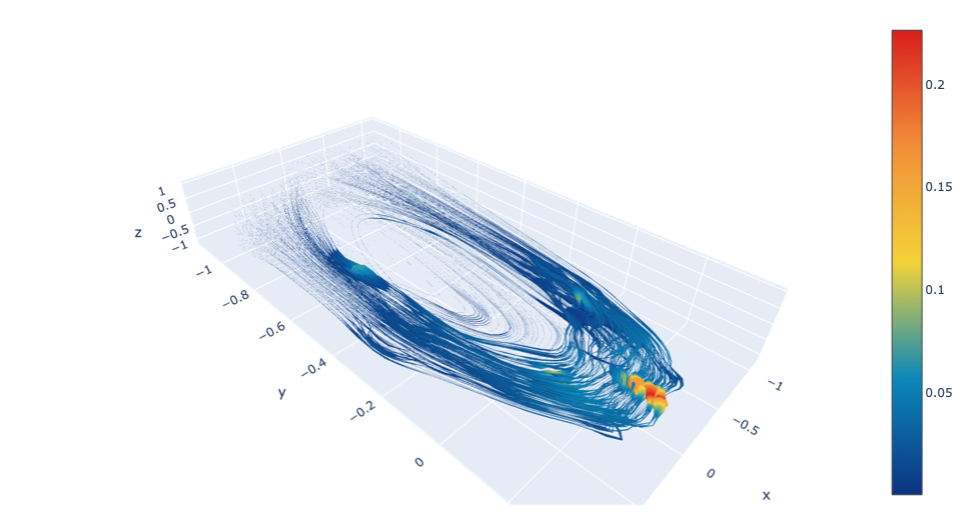}
        \caption{$t=0.1$}
    \end{subfigure}
    \begin{subfigure}{0.3\textwidth}
        \centering
        \includegraphics[width=1\linewidth]{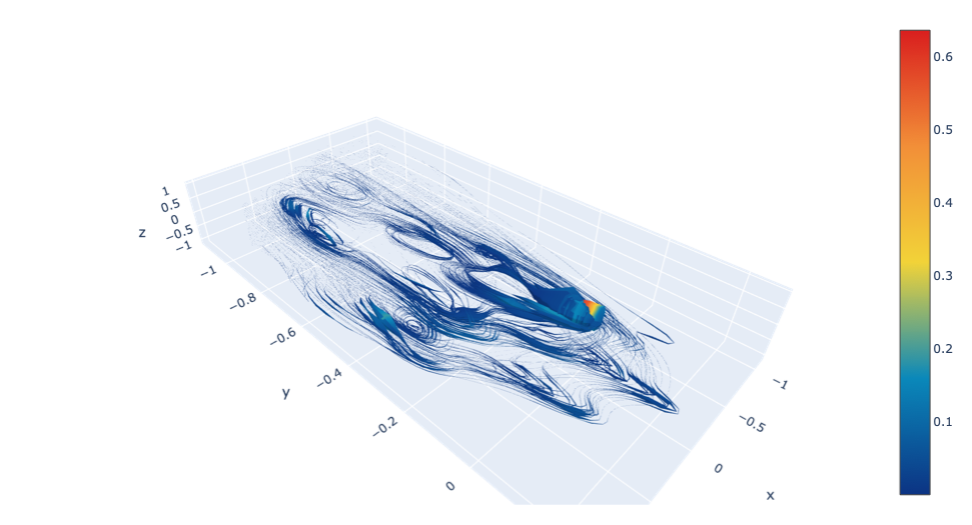}
        \caption{$t=0.2$} 
    \end{subfigure}
    \caption{Computed streamlines of the flow flow at $t=0$,  $0.1$, $0.2$ in the first experiment.}
\label{figure:1}
\end{figure}

\begin{figure}[ht]
\centering
    \begin{subfigure}{0.23\textwidth}
        \centering
        \includegraphics[width=1\linewidth]{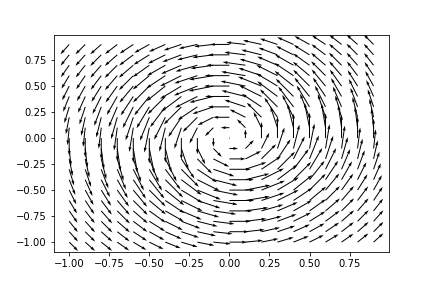}
        \caption{}
    \end{subfigure}
    \begin{subfigure}{0.23\textwidth}
        \centering
        \includegraphics[width=1\linewidth]{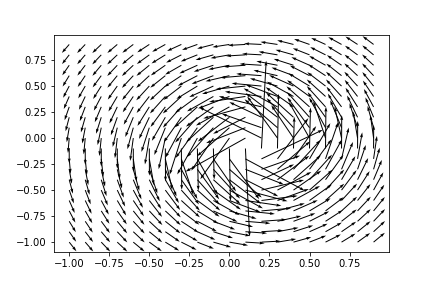}
        \caption{}
        
    \end{subfigure}
    \begin{subfigure}{0.23\textwidth}
        \centering
        \includegraphics[width=1\linewidth]{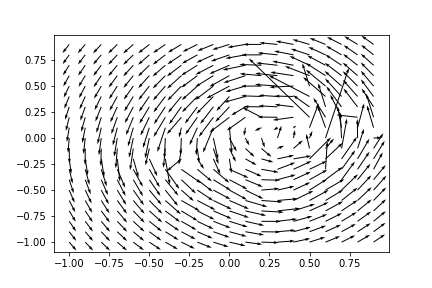}
        \caption{}
    \end{subfigure}
    \begin{subfigure}{0.23\textwidth}
           \centering
        \includegraphics[width=1\linewidth]{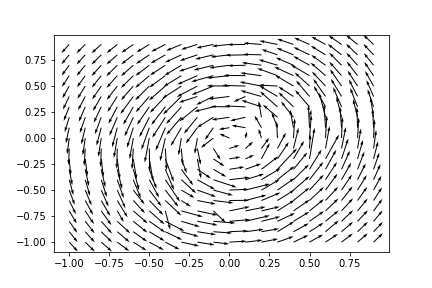}
        \caption{}
    \end{subfigure}
    \caption{The velocity field of the Lamb--Oseen vortex at $t=0.1$: (a) the exact solution; and the numerical solution computed with, respectively, (b) $N=1$, (c) $N=20$, (d) $N=100$ independent copies.}
\label{figure:2}
\end{figure}

\begin{table}[h]
\centering
\resizebox{0.8\textwidth}{!}{%
\begin{tabular}{|c|c|c|c|c|}
\hline
Point      & Exact~value        & $N=1$           & $N=20$          & $N=100$         \\ \hline
($-1.0$, $-1.0$) & ($0.08$, $-0.08$)  & ($0.06$,$-0.08$)  & ($0.06$,$-0.08$)  & ($0.08$, $-0.08$)  \\ \hline
($-0.5$, $-1.0$) & ($0.13$, $-0.06$)  & ($0.11$, $-0.08$)  & ($0.10$, $-0.07$)   & ($0.13$, $-0.07$)  \\ \hline
($0.0$, $-1.0$)  & ($0.16$, $0.00$)    & ($0.17$, $-0.03$)  & ($0.14$, $-0.03$)  & ($0.15$, $0.01$)   \\ \hline
($0.50$, $-1.0$)  & ($0.13$, $0.06$)   & ($0.15$, $0.06$)   & ($0.14$, $0.04$)   & ($0.12$, $0.07$)   \\ \hline
($1.00$, $-1.0$)  & ($0.08$, $0.08$)   & ($0.09$, $0.08$)   & ($0.10$, $0.07$)    & ($0.08$, $0.08$)   \\ \hline
($-1.0$, $-0.5$) & ($0.06$, $-0.13$)  & ($0.04$, $-0.12$)  & (0.05, $-0.11$)  & ($0.05$, $-0.13$)  \\ \hline
($-0.5$, $-0.5$) & ($0.15$, $-0.15$)  & ($0.11$, $-0.17$)  & ($0.11$, $-0.13$)  & ($0.11$, $-0.14$)  \\ \hline
($0.0$, $-0.5$)  & ($0.23$, $0.00$)    & ($0.30$, $-0.16$)   & ($0.16$, $-0.09$)  & ($0.17$, $0.01$)   \\ \hline
($0.5$, $-0.5$)  & ($0.15$, $0.15$)   & ($0.21$, $0.18$)   & ($0.16$, $0.10$)    & ($0.13$, $0.14$)   \\ \hline
($1.0$, $-0.5$)  & ($0.06$, $0.13$)   & ($0.07$, $0.15$)   & ($0.10$, $0.13$)    & ($0.06$, $0.12$)   \\ \hline
($-1.0$, $0.0$)  & ($-0.00$, $-0.16$)  & ($-0.01$, $-0.13$) & ($-0.01$, $-0.13$) & ($-0.02$, $-0.14$) \\ \hline
($-0.5$, $0.0$)  & ($-0.00$, $-0.23$)  & ($-0.04$, $-0.24$) & ($-0.04$, $-0.18$) & ($-0.04$, $-0.22$) \\ \hline
($0.0$, $0.0$)   & ($-0.00$, $0.00$)    & ($-0.26$, $-0.37$) & ($0.06$, $-0.12$)  & ($-0.09$, $0.04$)  \\ \hline
($0.5$, $0.0$)   & ($-0.00$, $0.23$)   & ($-0.18$, $0.32$)  & ($0.02$, $0.06$)   & ($0.02$, $0.21$)   \\ \hline
($1.0$, $0.0$)   & ($-0.00$, $0.16$)   & ($-0.02$, $0.18$)  & ($-0.03$, $0.24$)  & ($0.00$, $0.15$)    \\ \hline
($-1.0$, $0.5$)  & ($-0.06$, $-0.13$) & ($-0.06$, $-0.10$)  & ($-0.05$, $-0.10$)  & ($-0.06$, $-0.11$) \\ \hline
($-0.5$, $0.5$)  & ($-0.15$, $-0.15$) & ($-0.12$, $-0.13$) & ($-0.10$, $-0.12$)  & ($-0.13$, $-0.13$) \\ \hline
($0.0$, $0.5$)   & ($-0.23$, $0.00$)   & ($-0.24$, $-0.07$) & ($-0.21$, $-0.11$) & ($-0.20$, $0.00$)    \\ \hline
($0.5$, $0.5$)   & ($-0.15$, $0.15$)  & ($-0.19$, $0.11$)  & ($-0.06$, $0.20$)   & ($-0.13$, $0.12$)  \\ \hline
($1.0$, $0.5$)   & ($-0.06$, $0.13$)  & ($-0.09$, $0.12$)  & ($-0.10$, $0.13$)   & ($-0.06$, $0.13$)  \\ \hline
($-1.0$, $1.0$)  & ($-0.08$, $-0.08$) & ($-0.07$, $-0.07$) & ($-0.06$, $-0.07$) & ($-0.07$, $-0.07$) \\ \hline
($-0.5$, $1.0$)  & ($-0.13$, $-0.06$) & ($-0.10$, $-0.06$)  & ($-0.10$, $-0.07$)  & ($-0.11$, $-0.06$) \\ \hline
($0.0$, $1.0$)   & ($-0.16$, $0.00$)   & ($-0.14$, $-0.02$) & ($-0.14$, $-0.04$) & ($-0.14$, $-0.01$) \\ \hline
($0.5$, $1.0$)   & ($-0.13$, $0.06$)  & ($-0.13$, $0.04$)  & ($-0.14$, $0.04$)  & ($-0.13$, $0.05$)  \\ \hline
($1.0$, $1.0$)   & ($-0.08$, $0.08$)  & ($-0.09$, $0.07$)  & ($-0.10$, $0.07$)   & ($-0.08$, $0.08$)  \\ \hline \hline
    $L^1_{[-1,1]^3}$ norm error  &  ~             &$\mbox{Error $=0.91$}$~~          &$\mbox{Error $=0.66$}$~~          &$\mbox{Error $=0.19$}$~~          \\ \hline
\end{tabular}%
}
\caption{The exact values of the velocity field in the Lamb--Oseen vortex problem at $t=0.1$ (second column) at 25 lattice points (first column), and the values of the computed velocity field at those points, with $N = 1$ (third column), $N=20$ (fourth column), and $N=100$ (fifth column) independent copies, corresponding to the images depicted in Fig. \ref{figure:2}. The discrete $L^1_{[-\frac,1]^3}$ norm error based on the $20^3$ lattice points $(\frac{i}{10},\frac{j}{10},\frac{k}{10})_{-10\leq i,j,k\leq 9}$ is shown in the last row of the table.}

\label{tab:1}
\end{table}
\bigskip

\begin{table}[]
\centering
\resizebox{0.8\textwidth}{!}{
\begin{tabular}{|c|c|c|c|}
\hline
Point      & Exact~value                & $\Delta t =0.02$          & $\Delta t =0.01$         \\ \hline
( $1.57$ , $1.57$) & ( $-0.00$ , $-0.00$) & ( $0.00$ , $-0.00$) & ( $0.00$ , $-0.01$) \\ \hline
( $2.20$ , $1.57$) & ( $0.05$ , $-0.00$) & ( $0.04$ , $-0.00$) & ( $0.03$ , $-0.01$) \\ \hline
( $2.83$ , $1.57$) & ( $-0.04$ , $-0.00$) & ( $-0.08$ , $-0.01$) & ( $-0.07$ , $-0.01$) \\ \hline
( $3.46$ , $1.57$) & ( $-0.17$ , $-0.00$) & ( $-0.15$ , $-0.00$) & ( $-0.18$ , $-0.01$) \\ \hline
( $4.08$ , $1.57$) & ( $-0.17$ , $-0.00$) & ( $-0.15$ , $-0.00$) & ( $-0.17$ , $-0.00$) \\ \hline
( $1.57$ , $2.20$) & ( $-0.00$ , $0.17$) & ( $-0.00$ , $0.15$) & ( $0.00$ , $0.17$) \\ \hline
( $2.20$ , $2.20$) & ( $0.06$ , $0.15$) & ( $0.06$ , $0.13$) & ( $0.04$ , $0.15$) \\ \hline
( $2.83$ , $2.20$) & ( $-0.02$ , $0.12$) & ( $-0.13$ , $0.14$) & ( $-0.04$ , $0.11$) \\ \hline
( $3.46$ , $2.20$) & ( $-0.15$ , $0.09$) & ( $-0.15$ , $0.08$) & ( $-0.16$ , $0.08$) \\ \hline
( $4.08$ , $2.20$) & ( $-0.15$ , $0.06$) & ( $-0.14$ , $0.04$) & ( $-0.15$ , $0.04$) \\ \hline
( $1.57$ , $2.83$) & ( $-0.00$ , $0.17$) & ( $-0.00$ , $0.11$) & ( $0.00$ , $0.18$) \\ \hline
( $2.20$ , $2.83$) & ( $0.09$ , $0.15$) & ( $0.08$ , $0.13$) & ( $0.08$ , $0.15$) \\ \hline
( $2.83$ , $2.83$) & ( $0.04$ , $0.09$) & ( $0.04$ , $0.08$) & ( $0.02$ , $0.10$) \\ \hline
( $3.46$ , $2.83$) & ( $-0.09$ , $0.04$) & ( $-0.09$ , $0.03$) & ( $-0.10$ , $0.03$) \\ \hline
( $4.08$ , $2.83$) & ( $-0.12$ , $-0.02$) & ( $-0.11$ , $-0.03$) & ( $-0.12$ , $-0.04$) \\ \hline
( $1.57$ , $3.46$) & ( $-0.00$ , $0.04$) & ( $-0.00$ , $0.03$) & ( $0.00$ , $0.07$) \\ \hline
( $2.20$ , $3.46$) & ( $0.12$ , $0.02$) & ( $0.12$ , $-0.01$) & ( $0.12$ , $0.04$) \\ \hline
( $2.83$ , $3.46$) & ( $0.09$ , $-0.04$) & ( $0.11$ , $-0.03$) & ( $0.09$ , $-0.02$) \\ \hline
( $3.46$ , $3.46$) & ( $-0.04$ , $-0.09$) & ( $-0.03$ , $-0.08$) & ( $-0.03$ , $-0.09$) \\ \hline
( $4.08$ , $3.46$) & ( $-0.09$ , $-0.15$) & ( $-0.07$ , $-0.14$) & ( $-0.08$ , $-0.15$) \\ \hline
( $1.57$ , $4.08$) & ( $-0.00$ , $-0.05$) & ( $-0.00$ , $-0.02$) & ( $0.00$ , $-0.03$) \\ \hline
( $2.20$ , $4.08$) & ( $0.15$ , $-0.06$) & ( $0.14$ , $-0.05$) & ( $0.15$ , $-0.04$) \\ \hline
( $2.83$ , $4.08$) & ( $0.15$ , $-0.09$) & ( $0.14$ , $-0.10$) & ( $0.15$ , $-0.08$) \\ \hline
( $3.46$ , $4.08$) & ( $0.02$ , $-0.12$) & ( $0.05$ , $-0.11$) & ( $0.04$ , $-0.12$) \\ \hline
( $4.08$ , $4.08$) & ( $-0.06$ , $-0.15$) & ( $-0.06$ , $-0.13$) & ( $-0.04$ , $-0.15$) \\   \hline \hline
    $L^1_{[\frac{\pi}{2},\frac{3\pi}{2}]^2}$ norm error  &  ~             &$\mbox{Error $=0.26$}$~~          &$\mbox{Error $=0.20$}$~~             \\ \hline
\end{tabular}}
\caption{The exact values of the velocity field of the Taylor--Green vortex at $t=1$ (second column) at 25 lattice points (first column), and the values of the computed velocity field at those points, with $\Delta t = 0.02$ (third column), $\Delta t= 0.01$ (fourth column) corresponding to the images depicted in Fig. \ref{figure:new1}.}
\label{tab:2}
\end{table}

\begin{figure}[ht]
\centering
  \begin{subfigure}[ht]{0.3\textwidth}
        \centering
        \includegraphics[width=1\linewidth]{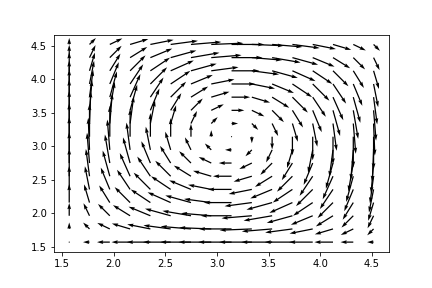}
        \caption{}
    \end{subfigure}
    \begin{subfigure}[ht]{0.3\textwidth}
        \centering
        \includegraphics[width=1\linewidth]{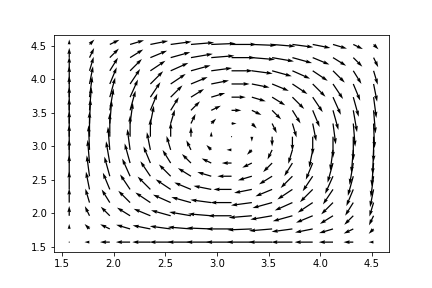}
        \caption{}
        
    \end{subfigure}
        \begin{subfigure}[ht]{0.3\textwidth}
        \centering
        \includegraphics[width=1\linewidth]{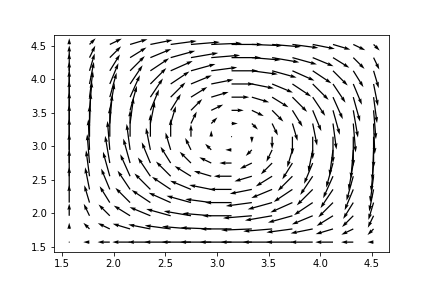}
        \caption{}
        
    \end{subfigure}
    
    \begin{subfigure}[ht]{0.3\textwidth}
        \centering
        \includegraphics[width=1\linewidth]{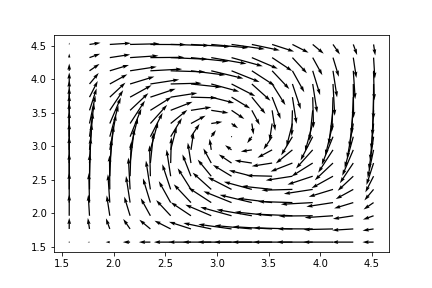}
        \caption{}
    \end{subfigure}
    \begin{subfigure}[ht]{0.3\textwidth}
        \centering
        \includegraphics[width=1\linewidth]{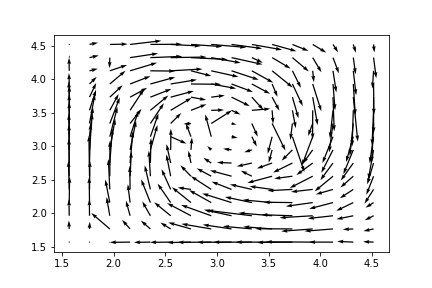}
        \caption{}
       \end{subfigure}
           \begin{subfigure}[ht]{0.3\textwidth}
        \centering
        \includegraphics[width=1\linewidth]{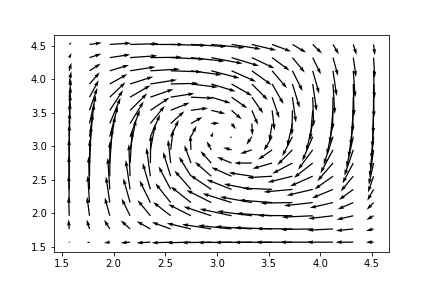}
        \caption{}
        
    \end{subfigure}

    \begin{subfigure}[ht]{0.3\textwidth}
        \centering
        \includegraphics[width=1\linewidth]{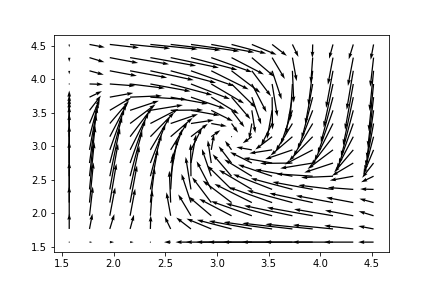}
        \caption{}
    \end{subfigure}
    \begin{subfigure}[ht]{0.3\textwidth}
           \centering
        \includegraphics[width=1\linewidth]{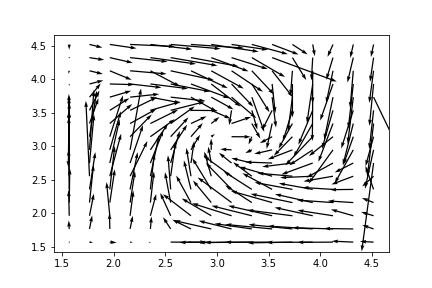}
        \caption{}
    \end{subfigure}
        \begin{subfigure}[ht]{0.3\textwidth}
        \centering
        \includegraphics[width=1\linewidth]{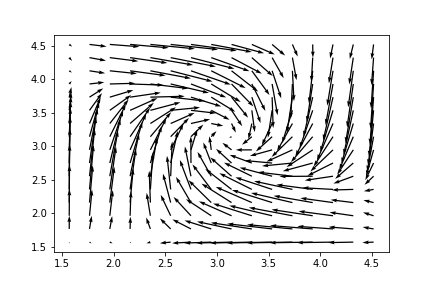}
        \caption{}
        
    \end{subfigure}
    
        \begin{subfigure}[ht]{0.3\textwidth}
        \centering
        \includegraphics[width=1\linewidth]{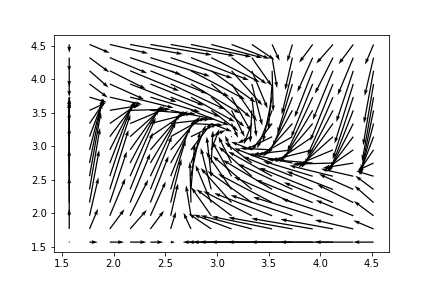}
        \caption{}
    \end{subfigure}
    \begin{subfigure}[ht]{0.3\textwidth}
           \centering
        \includegraphics[width=1\linewidth]{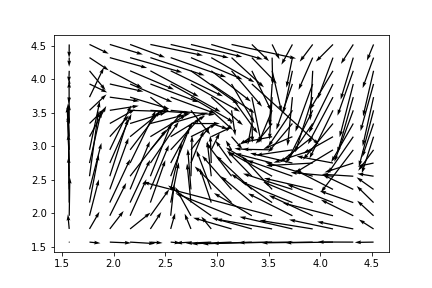}
        \caption{}
    \end{subfigure}
        \begin{subfigure}[ht]{0.3\textwidth}
        \centering
        \includegraphics[width=1\linewidth]{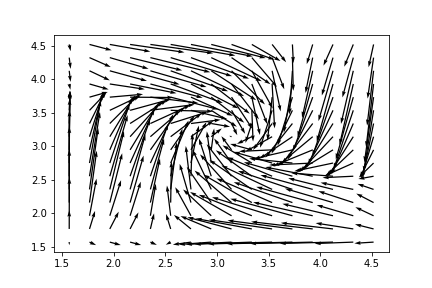}
        \caption{}
        
    \end{subfigure}
    
    \caption{The images in the left column represent the Taylor--Green vortex solution. Those in the right column have been obtained using  the proposed scheme. From the top to the bottom, $t=0,0.3,0.6,1$, respectively.  The images in the left column represent the Taylor--Green vortex solution. Those in the middle column and right column have been obtained using  the proposed scheme with step size $\Delta t=0.02$ and $0.01$, respectively; from the top to the bottom, $t=0, 0.3, 0.6, 1$.}
\label{figure:new1}
\end{figure}

\begin{figure}[ht]
\centering
  \begin{subfigure}[ht]{0.35\textwidth}
        \centering
        \includegraphics[width=1\linewidth]{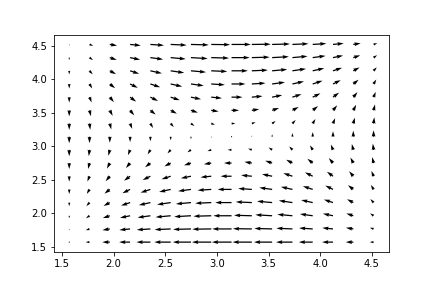}
        \caption{}
    \end{subfigure}
    \begin{subfigure}[ht]{0.35\textwidth}
        \centering
        \includegraphics[width=1\linewidth]{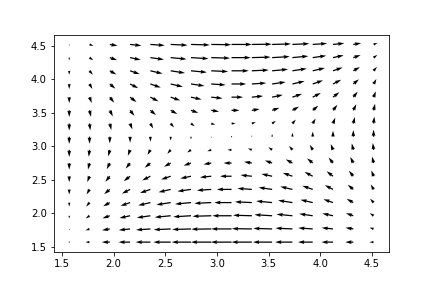}
        \caption{}
        
    \end{subfigure}
    \begin{subfigure}[ht]{0.35\textwidth}
        \centering
        \includegraphics[width=1\linewidth]{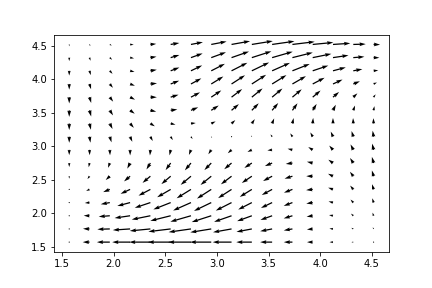}
        \caption{}
    \end{subfigure}
    \begin{subfigure}[ht]{0.35\textwidth}
        \centering
        \includegraphics[width=1\linewidth]{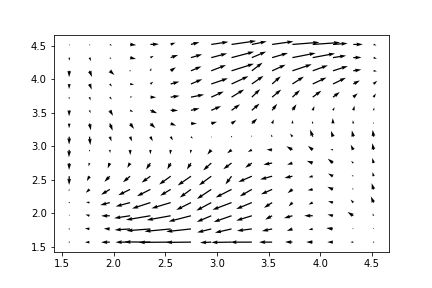}
        \caption{}
        
    \end{subfigure}
    \begin{subfigure}[ht]{0.35\textwidth}
        \centering
        \includegraphics[width=1\linewidth]{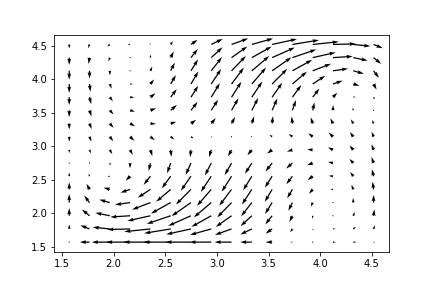}
        \caption{}
    \end{subfigure}
    \begin{subfigure}[ht]{0.35\textwidth}
           \centering
        \includegraphics[width=1\linewidth]{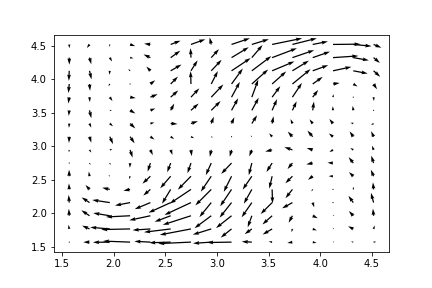}
        \caption{}
    \end{subfigure}
    
        \begin{subfigure}[ht]{0.35\textwidth}
        \centering
        \includegraphics[width=1\linewidth]{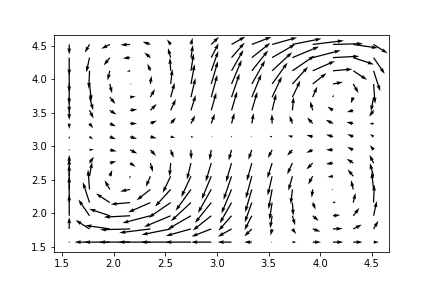}
        \caption{}
    \end{subfigure}
    \begin{subfigure}[ht]{0.35\textwidth}
           \centering
        \includegraphics[width=1\linewidth]{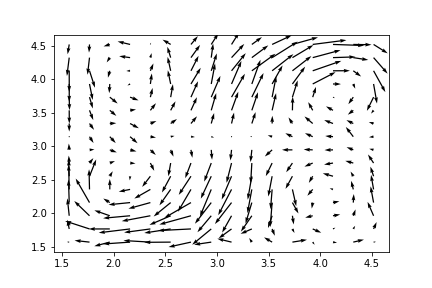}
        \caption{}
    \end{subfigure}
       \caption{The images in the left column represent the DNS solution. Those in the right column have been obtained using  the proposed scheme; from the top to the bottom, $t=0, 0.3, 0.6, 1$, respectively.}
\label{figure:new2}
\end{figure}

\section*{Data Accessibility}
The code for the  simulations is available on Dryad \cite{vortexcode}.

\section*{Acknowledgement}
The authors wish to thank Dan Crisan and Rongchan Zhu for bringing to our attention the papers \cite{constantin2008stochastic} and \cite{zhang2010stochastic,zhang2016stochastic}, and to one of the referees for pointing out to us the  paper \cite{wang2021sharp}. This article is based on work partially supported by the EPSRC Centre for Doctoral Training in Mathematics of Random Systems: Analysis, Modelling and Simulation (EP/S023925/1).

\bibliographystyle{abbrv}
	\bibliography{references}

\begin{thebibliography}{10}

\bibitem{Venttsel1959}
A.~D. A.~D.~Venttsel’.
\newblock On boundary conditions for multidimensional diffusion processes.
\newblock {\em Theory of Probability and Its Applications},
  4(2):10.1137/1104014, 1959.

\bibitem{albeverio2010generalized}
S.~Albeverio and Y.~Belopolskaya.
\newblock Generalized solutions of the cauchy problem for the navier-stokes
  system and diffusion processes.
\newblock {\em CUBO, A Mathematical Journal}, 12(2):77--96, 2010.

\bibitem{AndersonGreengard1985}
C.~Anderson and C.~Greengard.
\newblock On vortex methods.
\newblock {\em SIAM Journal on Numerical Analysis}, 22(3):413--440, 1985.

\bibitem{Beale1986}
J.~T. Beale.
\newblock A convergent 3-d vortex method with grid-free stretching.
\newblock {\em Mathematics of Computation}, 46(174):401--424, 1986.

\bibitem{BealeMajda1981a}
J.~T. Beale and A.~Majda.
\newblock Rates of convergence for viscous splitting of the {N}avier--{S}tokes
  equations.
\newblock {\em Mathematics of Computation}, 37(156):243--259, 1981.

\bibitem{BealeMajda1982a}
J.~T. Beale and A.~Majda.
\newblock Vortex methods, {I}. {C}onvergence in three dimensions.
\newblock {\em Mathematics of Computation}, 39(159):1--27, 1982.

\bibitem{BealeMajda1982b}
J.~T. Beale and A.~Majda.
\newblock Vortex methods, {II}. {H}igher order accuracy in two and three
  dimensions.
\newblock {\em Mathematics of Computation}, 39(159):29--52, 1982.

\bibitem{beaudoin2003simulation}
A.~Beaudoin, S.~Huberson, and E.~Rivoalen.
\newblock Simulation of anisotropic diffusion by means of a diffusion velocity
  method.
\newblock {\em Journal of Computational Physics}, 186(1):122--135, 2003.

\bibitem{BlumenthalGetoor1986}
R.~M. Blumenthal and R.~K. Getoor.
\newblock {\em Markov processes and potential theory}.
\newblock Academic Press, 1986.

\bibitem{busnello2005probabilistic}
B.~Busnello, F.~Flandoli, and M.~Romito.
\newblock A probabilistic representation for the vorticity of a
  three-dimensional viscous fluid and for general systems of parabolic
  equations.
\newblock {\em Proceedings of the Edinburgh Mathematical Society},
  48(2):295--336, 2005.

\bibitem{canuto1988spectral}
C.~Canuto, M.~Y. Hussaini, A.~Quarteroni, and T.~A. Zang~Jr.
\newblock {\em Spectral methods in fluid dynamics}.
\newblock Springer, Berlin, Heidelberg, 1988.

\bibitem{Chorin1973}
A.~J. Chorin.
\newblock Numerical study of slightly viscous flow.
\newblock {\em Journal of Fluid Mechanics}, 57(4):785--796, 1973.

\bibitem{Chorin1979}
A.~J. Chorin.
\newblock {\em Vorticity and turbulence}.
\newblock Springer Science \& Business Media, 1994.

\bibitem{ChorinBernard1973}
A.~J. Chorin and P.~S. Bernard.
\newblock Discretization of a vortex sheet, with an example of roll-up.
\newblock {\em Journal of Computational Physics}, 13(3):423--429, 1973.

\bibitem{ChungWalsh2005}
K.~L. Chung and J.~B. Walsh.
\newblock {\em Markov processes, Brownian motion, and time symmetry}.
\newblock Springer Science \& Business Media, 2006.

\bibitem{constantin2008stochastic}
P.~Constantin and G.~Iyer.
\newblock A stochastic lagrangian representation of the three-dimensional
  incompressible navier-stokes equations.
\newblock {\em Communications on Pure and Applied Mathematics: A Journal Issued
  by the Courant Institute of Mathematical Sciences}, 61(3):330--345, 2008.

\bibitem{CottetGoodmanHou1991}
G.-H. Cottet, J.~Goodman, and T.~Y. Hou.
\newblock Convergence of the grid-free point vortex method for the
  three-dimensional {E}uler equations.
\newblock {\em SIAM Journal on Numerical Analysis}, 28(2):291--307, 1991.

\bibitem{CottetandKoumoutsakos2000}
G.-H. Cottet and P.~D. Koumoutsakos.
\newblock {\em Vortex methods: theory and practice}.
\newblock Cambridge University Press, Cambridge, 2000.

\bibitem{cottet1990particle}
G.-H. Cottet and S.~Mas-Gallic.
\newblock A particle method to solve the {N}avier--{S}tokes system.
\newblock {\em Numerische Mathematik}, 57(1):805--827, 1990.

\bibitem{degond1989weighted}
P.~Degond and S.~Mas-Gallic.
\newblock The weighted particle method for convection-diffusion equations, {I}.
  {T}he case of an isotropic viscosity.
\newblock {\em Mathematics of Computation}, 53(188):485--507, 1989.

\bibitem{DellacherieandMeyerVolumeD}
C.~Dellacherie and P.~A. Meyer.
\newblock {\em Probabiliti$\acute{e}$s et potentiel, Chapitres XII $\grave{a}$
  XVI}.
\newblock Hermann, 1987.

\bibitem{Feynman1948}
R.~P. Feynman.
\newblock Space-time approach to non-relativistic quantum mechanics.
\newblock {\em Feynman's Thesis—A New Approach To Quantum Theory}, pages
  71--109, 2005.

\bibitem{Friedman1964}
A.~Friedman.
\newblock {\em Partial differential equations of parabolic type}.
\newblock Prentice-Hall, INC., 1964.

\bibitem{fronteau1984lie}
J.~Fronteau and P.~Combis.
\newblock A {L}ie-admissible method of integration of {F}okker--{P}lanck
  equations with non-linear coefficients (exact and numerical solutions).
\newblock Technical report, CEA Centre d'Etudes Nucleaires de Saclay, 1984.

\bibitem{GaoLiandXiao2018}
Y.~Gao, J.~Liang, and T.-J. Xiao.
\newblock A new method to obtain uniform decay rates for multidimensional wave
  equations with nonlinear acoustic boundary conditions.
\newblock {\em SIAM Journal on Control and Optimization}, 56(2):1303–1320,
  2018.

\bibitem{GaoRoquejoffre2022}
Y.~Gao and J.-M. Roquejoffre.
\newblock Asymptotic stability for diffusion with dynamic boundary reaction
  from ginzburg-landau energy.
\newblock {\em arXiv 2201.02105}, 2022.

\bibitem{gazzola2014reinforcement}
M.~Gazzola, B.~Hejazialhosseini, and P.~Koumoutsakos.
\newblock Reinforcement learning and wavelet adapted vortex methods for
  simulations of self-propelled swimmers.
\newblock {\em SIAM Journal on Scientific Computing}, 36(3):B622--B639, 2014.

\bibitem{Goodman1987}
J.~Goodman.
\newblock Convergence of the random vortex method.
\newblock {\em Communications on Pure and Applied Mathematics}, 40(2):189--220,
  1987.

\bibitem{Hale1977}
J.~Hale.
\newblock {\em Theory of functional differential equations}.
\newblock Springer-Verlag, 1977.

\bibitem{IkedaandWatanabe1989}
N.~Ikeda and S.~Watanabe.
\newblock {\em Stochastic differential equations and diffusion processes}.
\newblock North-Holland Pub. Company, 2nd edition edition, 1989.

\bibitem{Kac1950}
M.~Kac.
\newblock On some connections between probability theory and differential and
  integral equations.
\newblock In {\em Proceedings of the second Berkeley symposium on mathematical
  statistics and probability}, pages 189--215. University of California Press,
  1951.

\bibitem{Kac1959}
M.~Kac.
\newblock {\em Probability and Related Topics in Physical Sciences}.
\newblock American Mathematical Soc., 1959.

\bibitem{knio1990numerical}
O.~M. Knio and A.~F. Ghoniem.
\newblock Numerical study of a three-dimensional vortex method.
\newblock {\em Journal of Computational Physics}, 86(1):75--106, 1990.

\bibitem{kutz2017deep}
J.~N. Kutz.
\newblock Deep learning in fluid dynamics.
\newblock {\em Journal of Fluid Mechanics}, 814:1--4, 2017.

\bibitem{lee_you_2019}
S.~Lee and D.~You.
\newblock Data-driven prediction of unsteady flow over a circular cylinder
  using deep learning.
\newblock {\em Journal of Fluid Mechanics}, 879:217–254, 2019.

\bibitem{Long1988}
D.-G. Long.
\newblock Convergence of the random vortex method in two dimensions.
\newblock {\em Journal of the American Mathematical Society}, 1(4):779--804,
  1988.

\bibitem{tlyonsStoica1999}
T.~Lyons and L.~Stoica.
\newblock The limits of stochastic integrals of differential forms.
\newblock {\em The Annals of Probability}, 27(1):1--49, 1999.

\bibitem{tlyonswzheng1988}
T.~Lyons and W.~Zheng.
\newblock A crossing estimate for the canonical process on a {D}irichlet space
  and a tightness result.
\newblock {\em Ast\'erisque}, 157–158:249--271, 1988.

\bibitem{MajdaandBertozzi2002}
A.~J. Majda and A.~L. Bertozzi.
\newblock {\em Vorticity and incompressible flow}.
\newblock Cambridge University Press, 2002.

\bibitem{mortensen2016high}
M.~Mortensen and H.~P. Langtangen.
\newblock High performance python for direct numerical simulations of turbulent
  flows.
\newblock {\em Computer Physics Communications}, 203:53--65, 2016.

\bibitem{Nash1958}
J.~Nash.
\newblock Continuity of solutions of parabolic and elliptic equations.
\newblock {\em American Journal of Mathematics}, 80(4):931--954, 1958.

\bibitem{ogami1991viscous}
Y.~Ogami and T.~Akamatsu.
\newblock Viscous flow simulation using the discrete vortex model—the
  diffusion velocity method.
\newblock {\em Computers \& Fluids}, 19(3-4):433--441, 1991.

\bibitem{olivera2021probabilistic}
C.~Olivera.
\newblock Probabilistic representation for mild solution of the navier--stokes
  equations.
\newblock {\em Mathematical Research Letters}, 28(2):563--573, 2021.

\bibitem{orszag1972numerical}
S.~A. Orszag and G.~Patterson~Jr.
\newblock Numerical simulation of three-dimensional homogeneous isotropic
  turbulence.
\newblock {\em Physical Review Letters}, 28(2):76, 1972.

\bibitem{ossiander2005probabilistic}
M.~Ossiander.
\newblock A probabilistic representation of solutions of the incompressible
  navier-stokes equations in r3.
\newblock {\em Probability theory and related fields}, 133(2):267--298, 2005.

\bibitem{PardouxandPeng1990}
E.~Pardoux and S.~G. Peng.
\newblock Adapted solution of a backward stochastic differential equation.
\newblock {\em Systems \& Control Letters}, 14(1):55--61, 1990.

\bibitem{Peng1991}
S.~G. Peng.
\newblock Probabilistic interpretation for systems of quasilinear parabolic
  partial differential equations.
\newblock {\em Stochastics and Stochastics Reports (Print)}, 37(1-2):61--74,
  1991.

\bibitem{qian2021tracking}
Z.~Qian, Y.~Qiu, and Y.~Zhang.
\newblock Tracking the vortex motion by using {B}rownian fluid particles.
\newblock {\em Physics of Fluids}, 33(10):105113, 2021.

\bibitem{vortexcode}
Z.~Qian, E.~S{\"u}li, and Y.~Zhang.
\newblock The code of random vortex simulations.
\newblock 2021.
\newblock doi: \url{10.5061/dryad.44j0zpcgs}.

\bibitem{Rosenhead1932}
L.~Rosenhead.
\newblock The point vortex approximation of a vortex sheet.
\newblock {\em Proc. Roy. Soc. London Ser. A}, 134:170--192, 1932.

\bibitem{StroockVaradhanBoundary}
D.~W. Stroock and S.~R.~S. Varadhan.
\newblock Diffusion processes with boundary conditions.
\newblock {\em Commun. on Pure and Appl. Math.}, 24:147--225, 1971.

\bibitem{StroockandVaradhan1979}
D.~W. Stroock and S.~S. Varadhan.
\newblock {\em Multidimensional diffusion processes}.
\newblock Springer, 1979.

\bibitem{Taylor1921}
G.~I. Taylor.
\newblock Diffusion by continuous movements.
\newblock {\em Proceedings of the London Mathematical Society}, 2(1):196--212,
  1922.

\bibitem{wang2021sharp}
Z.~Wang, J.~Xin, and Z.~Zhang.
\newblock Sharp error estimates on a stochastic structure-preserving scheme in
  computing effective diffusivity of 3d chaotic flows.
\newblock {\em Multiscale Modeling \& Simulation}, 19(3):1167--1189, 2021.

\bibitem{zhang2010stochastic}
X.~Zhang.
\newblock A stochastic representation for backward incompressible navier-stokes
  equations.
\newblock {\em Probability theory and related fields}, 148(1):305--332, 2010.

\bibitem{zhang2016stochastic}
X.~Zhang.
\newblock Stochastic differential equations with sobolev diffusion and singular
  drift and applications.
\newblock {\em The Annals of Applied Probability}, 26(5):2697--2732, 2016.

\end{thebibliography}

\end{document}